  \providecommand\BibTeX{{%
    \normalfont B\kern-0.5em{\scshape i\kern-0.25em b}\kern-0.8em\TeX}}}
\let\savedegree\bigtimes
\let\bigtimes\relax
\let\bigtimes\savedegree
\theoremstyle{plain}
\DeclareMathOperator*{\argmax}{argmax}
\DeclareMathOperator*{\argmin}{argmin}
\newcommand{\Access}{\mathsf{AC}}
\newcommand{\Anti}{\overline{N}}
\newcommand{\boomerang}{\mathsf{bdeg}}
\newcommand{\bndeg}{\mathsf{bnde}}
\newcommand{\brandom}{\mathsf{bran}}
\newcommand{\degr}{\mathsf{xdeg}}
\renewcommand{\degree}{\mathrm{deg}}
\newcommand{\degN}{\mathrm{ndeg}}
\newcommand{\dist}{\mathsf{dist}}
\newcommand{\Ext}{\mathrm{X}}
\newcommand{\g}{\mathfrak{G}}
\newcommand{\local}{\mathsf{ldeg}}
\newcommand{\lrand}{\mathsf{lran}}
\newcommand{\N}{\mathbb{N}}
\newcommand{\prowl}{\mathsf{prowl}}
\newcommand{\random}{\mathsf{xran}}
\begin{document}


\title[Opinion convergence]{Centralization Problem for Opinion Convergence in Decentralized Networks}



\author{Yiping Liu}
\affiliation{
  \institution{Univsersity of Auckland}
  \country{New Zealand}}
\email{	yliu823@aucklanduni.ac.nz}

\author{Jiamou Liu}
\affiliation{
  \institution{University of Auckland}
  \country{New Zealand}}
\email{jiamou.liu@auckland.ac.nz}

\author{Bakhadyr Khoussaino}
\affiliation{
  \institution{University of Electronic Science and Technology of China}
  \country{China}}
\email{bmk@cs.auckland.ac.nz}

\author{Miao Qiao}
\affiliation{
  \institution{University of Auckland}
  \country{New Zealand}}
\email{miao.qiao@auckland.ac.nz}

\author{Bo Yan}
\affiliation{
  \institution{Beijing Institute of Technology}
  \country{China}}
\email{yanbo@bit.edu.cn}
\renewcommand{\shortauthors}{Anonymous, et al.}

\begin{abstract}
This paper aims to provide a new perspective on the interplay between decentralization -- a prevalent character of multi-agent systems -- and centralization, i.e., the task of imposing central control to meet system-level goals. In particular, in the context of networked opinion dynamic model, the paper proposes and discusses a framework for centralization. More precisely, a decentralized network consists of autonomous agents and their social structure that is unknown and dynamic. Centralization is a process of appointing agents in the network to act as access units who provide information and exert influence over their local surroundings. We discuss centralization for the DeGroot model of opinion dynamics, aiming to enforce  opinion convergence using the minimum number of access units. We show that the key to the centralization process lies in selecting access units so that they form a dominating set. We then propose algorithms under a new local algorithmic framework, namely prowling, to accomplish this task. To validate our algorithm,  we perform systematic experiments over both real-world and synthetic networks and verify that our algorithm outperforms benchmarks.
\end{abstract}

\begin{CCSXML}
<ccs2012>
   <concept>
       <concept_id>10002951.10003260.10003282.10003292</concept_id>
       <concept_desc>Information systems~Social networks</concept_desc>
       <concept_significance>500</concept_significance>
       </concept>
   <concept>
       <concept_id>10010147.10010178.10010219.10010220</concept_id>
       <concept_desc>Computing methodologies~Multi-agent systems</concept_desc>
       <concept_significance>300</concept_significance>
       </concept>
   <concept>
       <concept_id>10010147.10010178.10010205.10010212</concept_id>
       <concept_desc>Computing methodologies~Search with partial observations</concept_desc>
       <concept_significance>300</concept_significance>
       </concept>
   <concept>
       <concept_id>10003033.10003083.10003094</concept_id>
       <concept_desc>Networks~Network dynamics</concept_desc>
       <concept_significance>300</concept_significance>
       </concept>
   <concept>
 </ccs2012>
\end{CCSXML}

\ccsdesc[500]{Information systems~Social networks}
\ccsdesc[300]{Computing methodologies~Multi-agent systems}
\ccsdesc[300]{Computing methodologies~Search with partial observations}
\ccsdesc[300]{Networks~Network dynamics}

\keywords{Social network, dynamic network, partially-known network, opinion dynamics, dominating set.}

\maketitle

\section{Introduction}
{\em Opinion dynamics} aims to illustrate how opinions of individuals evolve in a crowd  \cite{de2018shaping}, so as to provide insights on important challenges in multi-agent systems such as group decision making  \cite{dong2018consensus}.
In an opinion dynamics model, every agent holds an opinion towards a certain issue.
The agents interact by revealing their opinions to others, and in turn updating their opinions as a result of such interactions. Such a model is {\em decentralized} in the following sense. 
First, the agents' states, which capture not only their opinions but also with whom they interact, evolve in a way without the control of a higher-level entity, i.e., the ``system controller''.
Second, in the presence of a large number of agents, any potential system controller has  only {\em partial information} about these states, meaning that agents' opinions and their social interactions are not directly accessible.
In such a decentralized system, however, the agents may fail to reach a consensus, a key objective of the opinion dynamic process. Therefore a need arises to impose some coordinated control.

The interplay between decentralized systems and centralized control has been the subject of intense studies, with implications in fields like networking, political science, and information systems  \cite{galloway2004protocol,everton2015dark,wangDecentralizedCommunicationControl2015}.
Going along with this theme, we formulate  {\em decentralized networks} to capture the type of opinion dynamics models described above. In short, a decentralized network consists of agents whose repeated interactions form a social network, while unifying the following properties.
(1) {\bf Dynamic topology}: Here, the social network is not static but rather it changes with time. There has been a growing interest in dynamic networks as well as how changes to the network topology affect opinions \citep{singh2012accelerating,marchant2015manipulating}.
(2) {\bf Unknown topology}: The social networks are not provided upfront due to their scale and the locality of knowledge. The study of these types of networks requires
discovering the network topology through some form of exploration \citep{zhuang2013influence,wilder2018maximizing,he2019opinion,marchant2017convention}.
Over a decentralized network, we define a process of {\em centralization} that imposes central control via two functionalities: {\bf monitoring}, meaning that a number of agents are appointed as {\em access units} to gather information about a local region; and {\bf influencing}, meaning that these access units may exert influence to the agents as directed by the system controller.
This setup captures two extrema: (1) when no agent is an access unit, which means that the system is {\em fully decentralized}, and (2) when all agents are access units, which means that the system is {\em fully centralized}. Centralization refers to the process of appointing access units, starting from extrema (1), moving towards extrema (2) until an appropriate level of  control is imposed. The  challenge is to adopt suitable strategies to explore the unknown and dynamic social network and appoint access units along the way. 

This paper focuses on the centralization problem for opinion convergence. We propose a decentralized version of the well-established {\em DeGroot model} to drive agents' opinions \cite{degroot1974reaching}: The opinion of an agent is a real value in $[0,1]$, which is updated by the average opinion of neighboring agents. The goal is to have the opinions of all agents converged to 1. One paradigm to achieve this, as proposed in \cite{singh2012accelerating}, is to deploy ``committed agents'' whose opinion is fixed at 1. We model these committed agents as the access units in our centralization processes. We now summarize our contributions:

\smallskip
\noindent
{\bf (1)}. We define decentralized networks and  formulate the centralization problem over decentralized DeGroot model. These are presented in Sections~\ref{S:Basics} and \ref{S:Statement}. These notions can potentially be defined for other tasks such as influence maximization or norm emergence  \cite{yadav2016using,marchant2017convention}, thereby triggering fruitful new research.

\smallskip
\noindent
{\bf (2)}. We seek
to effectively facilitate opinion convergence using the minimum number of access units.
We establish a connection between finding a dominating set, an important concept in network theory, and the centralization  problem for opinion convergence. We provide theoretical results and algorithms in Sec.~\ref{S:Statement} and \ref{S:Prowling}. These form the backbone of our experimental results in the last section.

\smallskip
\noindent
{\bf (3)}. We explore {\em local algorithms}  \cite{brautbar2010local} to solve the centralization problem. For this, we propose a new algorithmic framework, {\em prowling}, that probes an unknown network while selecting access units. We design three prowling algorithms: one {\em exterior-based}, and two {\em B-set} prowling algorithms (see Section~\ref{S:Prowling}). These algorithms ensure opinion convergence under general conditions. Through empirical validation over dynamic real-world and synthetic  datasets, we validate the effectiveness of our algorithms (See Section~\ref{sec:experiment}).

\section{Related work}\label{sec:related}
This paper is related to work on {\em local algorithms}  which aims to identify global properties of a graph using only local information \citep{brautbar2010local}. For example, \citep{borgs2012power} proposes an algorithm that explores an unknown network by repeating the following: greedily select  a ``known'' node, and then  make a random move to an unknown neighbor of the selected node. This method guarantees to cover a (static) graph using a small set of nodes that has a $O(\log(\Delta))$-factor from an optimal set, where $\Delta$ is the maximum degree. If we view nodes traversed by the local algorithm as access units, then this work fits the framework studied in our paper.
Wilder et al. utilize local algorithms for influence maximization \cite{wilder2018maximizing}, where they design a {\em probing} mechanism to collect information, i.e., walks in the unknown part of the network. The structural statistics collected by repeated probes direct the selection of seed nodes, which can also be considered as access units. In our paper, we adopt a different probing method. Instead of using probes as  means to sample information, we implement a search procedure through the unknown part of the network to locate the next access unit. We call such types of algorithms {\em prowling} and consider this procedure as a less costly and adaptive alternative to the exploration strategy used in \cite{wilder2018maximizing}.

{\em Opinion dynamics}, with its application to group decision making, has been intensively studied  \cite{abrahamsson2019opinion,ghaderi2014opinion,dietrich2017control,dong2018consensus}.
The vast majority of work on opinion dynamics assumes that the social network is given as input,  implying that the input is both static and fully-known. This allows us to directly employ algorithms from graph theory and combinatorial optimization. Singh et al. \cite{singh2012accelerating} study the co-evolution of network structures and individual opinion and found that committed agents can accelerate opinion convergence.
 He et al. \cite{he2019opinion} investigate opinion maximization by selecting seed agents under the assumption of partial information  on edge weights.
Their work, however,  omits  information collection,  which is  important in  real-world applications.

While no work exists that investigates opinion dynamics in  unknown and dynamic social networks, several papers study the related topic of  {\em influence maximization} in unknown networks. Social influence is normally modeled using a cascading model, giving rise to a submodular function. 
When the social network is not available, it is common to assume that sampled influence pairs are available to extract information regarding the social network \cite{balkanski2017limitations,balkanski2017importance,yan2019maximization}. 
These  work invokes submodular function optimization, which is not applicable in models of opinion dynamics.   Wu et al. \cite{wu2019maximizing} assume that the network topology is given while the edge weight is unknown in a cascade model. Zhuang et al. \cite{zhuang2013influence} study networks that are initially known, but the structural changes are only observable by probing. Their method first selects a set of probing nodes that reveal their neighborhood topology and then selects a set of seed nodes accordingly. Our setting is somewhat similar to \cite{zhuang2013influence} except that our network is not initially known, and we consider opinion dynamics rather than information diffusion.

Another aspect of our model, which is distinctive from all work we mentioned above, lies in selecting access units as an iterative process as the system evolves, i.e., centralization is {\em adaptive}. This resembles the setting of {\em adaptive influence maximization} that iteratively selects seed nodes while observing network evolution \cite{yadav2016using}.

\section{Decentralized Network}\label{S:Basics}

\subsection{The general setup}
We define our general framework of {\em decentralized networks}, then we present an instance of this framework for opinion dynamics modeling. Assume that there is a fixed (possibly infinite) universal set of {\em agents} $U$. These agents will be nodes of our social network.
Let $Q$ be a set of {\em states} of the agents. 
For any subset $V\subseteq U$, use $[V]^2$ to denote the set of all pairs $\{x,y\}\subseteq V$ with $x\neq y$.

\begin{definition}\label{def:network}
A {\em network instance} is a triple $G=(E,C,S)$ where $E$ is a finite set of undirected edges $E\subseteq [U]^2$, $C\colon U\to Q$ assigns every agent in $U$ a state, and $S\subseteq U$ is a set of {\em access units}.
\end{definition}
  Since we assume that $U$ is the de facto node set throughout the paper, Definition~\ref{def:network} does not explicitly include a node set in the graph. In the next definition, we represent time as  $t=0,1,2,\ldots$
\begin{definition}
A {\em decentralized network} is a triple $$\g=\left(\left\{e(t)\mid e\in [U]^2\right\}_{t\in \N}, \{v(t)\mid v\in U\}_{t\in \N},\{S_t\}_{t\in \N}\right)$$
where $e(t)$ is a random variable with domain $\{0,1\}$, $v(t)$ is a random variable with domain $Q$, and $S_t\subseteq U$ is
a set of access units.
\end{definition}
Intuitively, for $e=\{u,v\}\in [U]^2$, $e(t)=1$ iff there is an $\{u,v\}$ at time $t$, and for $v\in U$, $v(t)$ is the state of $v$ at time $t$.  The {\em network instance at time $t$} is $G_t=(E_t,C_t,S_t)$, where $E_t=\{e\in [U]^2\mid e(t)=1\}$ and $C_t(v)=v(t)$. We will therefore refer to $\g$ also by the sequence $\{G_t\}_{t\in \N}$.
We assume that the sequences $\{E_t\}_{t\in \N}$ and $\{C_t\}_{t\in \N}$ are the outcomes of Markovian stochastic processes with known distributions $\{P_e\}_{e\in [U]^2}$ and $\{P_v\}_{v\in V}$ where
any $e(t+1)\sim P_e(\cdot \mid E_t,C_t,S_t)$ and any $v(t+1)\sim P_v(\cdot \mid E_t,C_t,S_t)$.  
Note also that this stochastic definition subsumes {\em deterministic} state transitions, i.e., function  $f_v$  that produces  $v(t+1)$  given $E_t$ and $C_t$. Indeed, our opinion dynamic model will adopt deterministic state transitions (see Sec.~\ref{DeGroot}).  Note that the sequence $\{S_t\}_{t\in \N}$ is not a stochastic process but rather a deterministic sequence of subsets of agents. The algorithms that we describe  will ensure that $S_t \subseteq S_{t+1}$ for all $t\in \N$. 
The system knows the dynamics of the agents. 

 The access units $S_t$ represent the means for a system controller to monitor and influence the agents: For any $v\in S_t$, the {\em accessible region} of $v$ is a set $\Access_t(v)\subseteq U$ that contains $v$.
The {\em accessible region} of set $S$ is $\Access_t(S)=\bigcup_{v\in S} \Access_t(v)$. The accessible region plays two roles: (1) $\Access_t(v)$ defines $v$'s scope of influence. In other words, for any $S,S'\subseteq U$, for any node $u\in U$, $P_u(\cdot\mid E_t,C_t,S)=P_u(\cdot \mid E_t,C_t,S')$ whenever $\{v\in S\mid u\in \Access_t(v)\}=\{v\in S'\mid u\in \Access_t(v)\}$. (2) $\Access_t(S_t)$ bounds the information accessible by the system controller.  In particular, {\em only nodes in $S_t\cup \Access_t(S_t)$, their states, as well as all edges attached to $\Access_t(S_t)$ are known by the system controller}; any other information about the network is not accessible, nor influenced by the system controller.   Note that, with these assumptions, the system controller can also tell whether  there is a node $v\notin \Access_t(S_t)$ that is connected to a node in $\Access_t(S_t)$.

\subsection{Decentralized DeGroot model}\label{DeGroot}
In the rest of this paper, the decentralized network will be a decentralized DeGroot model, which we define in this section.  
Assume that at any time $t$, the network instance $G_t$ consists of only one (finite) non-trivial connected component $V_t\subseteq U$ such that all edges and access units are from this connected component. We restrict ourselves to {\em incremental networks} where any edge, once appeared, will not be deleted, and thus $V_t \subseteq V_{t+1}$. The opinion of any node $v\in U\setminus V_t$ is irrelevant (therefore $v(t)$ can be assumed to be 0 for simplicity). So, $V_t$ represents the agents that {\em exist} at time $t$ and we may write the network instance $G_t$ as the tuple $(V_t, E_t, C_t, S_t)$.  We assume that the edges evolve independently of access units, i.e., $P_e(\cdot\mid E_t,C_t, S)=P_e(\cdot \mid E_t,C_t, S')$ for all $S$ and $S'$.

Write $\dist_t(u,v)$ for the length of a shortest path in $E_t$ for $u,v\in V_t$. The {\em accessible region} of node $v$ is parametrized by a {\em radius} $r$:
\begin{equation}\label{eqn:access}
\Access^{(r)}_t(v)=\{u\in U\mid \dist_t(u,v)\leq r\}.
\end{equation}
As $r$ is fixed, we write $\Access_t(\cdot)$ for $\Access^{(r)}_t(\cdot)$. In principle, $r$ can be set to any positive integer, we usually assume that an access unit can only influence its close proximity and thus $r\leq 2$.

Following DeGroot model \cite{degroot1974reaching}, an opinion is a value in $Q=[0,1]$, and agents align their opinion with their neighbors. The system controller imposes control as follows: (1) Each $u\in S_t$ is a {\em ``committed agent''} with a fixed opinion at value 1. (2) Every $v\in \Access_t(S_t)$ is assigned an ``accessor'' $u\in S_t$, where $v\in \Access_t(u)$. This accessor has direct influence on  $v$ in the sense that $v$'s opinion at time $t+1$ is the average of the opinion values of its non-$S_t$ neighbors at time $t$ and the opinion of $u$ fixed at 1. Formally, we define \footnote{This definition resembles the DeGroot model with committed agents \cite{liu2014control,abrahamsson2019opinion}. The only difference is that here the effect on $v(t+1)$ of multiple access units is identical to the effect of a single unit, whereas the original DeGroot model treats all agents equally.}:

\begin{definition}The {\em decentralized DeGroot model} is a decentralized network where the state transition function of nodes is set as:
\begin{equation}\label{eqn:degroot}
v(t+1)=\begin{cases}
1 & \text{if $v\in S_t$},\\
\sum_{u\mid \{u,v\}\in E_t} \frac{u(t)}{|\{u\mid \{u,v\}\in E_t\}|} & \text{if $v\notin \Access_t(S_t)$},\\
\frac{\left(1+\sum_{u\mid \{u,v\}\in E_t,u\notin S_t} u(t)\right)}{1+|\{u\mid \{u,v\}\in E_t,u\notin S_t\}|} & \text{otherwise.}
\end{cases}
\end{equation}
\end{definition}

\section{The Centralization Problem}\label{S:Statement}


\subsection{Problem formulation}
{\em Centralization} is a computational process performed by the system controller that selects access units in a decentralized network. 
As the network is initially unknown, selecting an access unit requires a form of ``information gathering''.   
Starting with an initial network instance $(V_0,E_0,C_0,S_0)$, where $S_0=\varnothing$, the process repeatedly executes a {\em local algorithm} that
selects access units. 
When executed at time $t>0$, the algorithm uses information provided in $\Access_t(S_{t-1})$ (states and edges of nodes) and
returns a node $u_t\notin S_{t-1}$, which is the next access unit, i.e., $S_{t}=S_{t-1}\cup \{u_t\}$.

We now propose a type of local algorithms called {\em prowling}. Let the {\em closed neighborhood} of $v\in V_t$ be $N_t(v)=\{u\in V_t\mid \dist(u,v)\leq 1\}$. The {\em $1^+$-neighborhood} $N^+_t(v)$ of  node $v$ is a data structure containing the states of all $v\in N_t(v)$ and their degrees in $G_t$.
\begin{definition}
A  ({\em $k$-step}) {\em prowling algorithm} $\prowl(S_{t-1})$, given the set $S_{t-1}$, implements a search strategy at time $t$:
\begin{itemize}[leftmargin=*]
\item Start at  an agent $w_0$ from  $\Access_{t}(S_{t-1})$.
\item For $0\leq i<k$, if $w_i\in \Access_t(S_{t-1})$, move from $w_i$ to a neighbor $w_{i+1}$ of $w_i$; if $w_i\notin \Access_t(S_{t-1})$, then first inquire about the $1^+$-neighborhood $N^+_t(w_i)$ and then move from $w_i$ to a neighbor $w_{i+1}$ of $w_i$ based on the obtained information. 
\item The last agent $w_k$ is chosen as output.
\end{itemize}
\end{definition}
The prowling algorithm resembles a {\em name-generation process} used in social network analysis \cite{chung2005exploring}, where an interviewer asks a participant (an agent in the network) to reveal the names of those who came into direct contact with the participant, then those whose names are revealed are asked the same question and so on. Note that every step of the prowling algorithm can make use of the degree of the current node.  The length $k$ is a fixed parameter. Figure~\ref{fig:prowl} illustrates a run of a 4-step prowling algorithm. 

\begin{figure}
\resizebox{!}{3cm}{\includegraphics{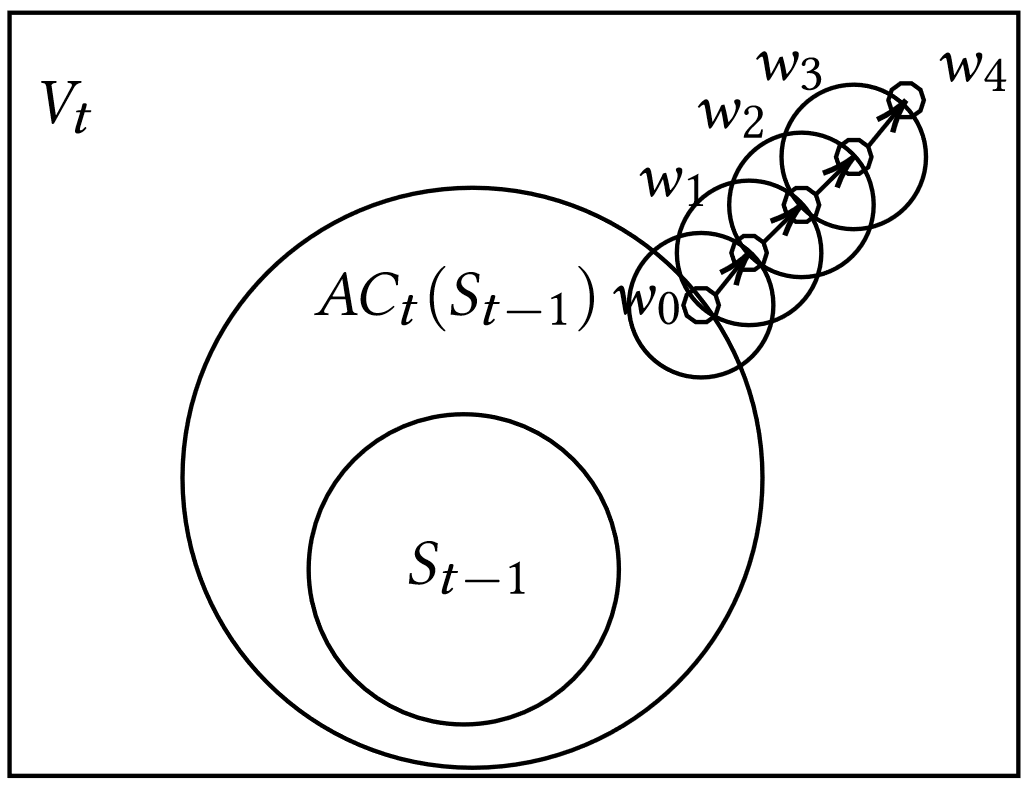}}
\caption{An execution of prowling: Starting from $w_0$ in the accessible region, the 4-steps prowling produces the node $w_4$ as next access unit.}\label{fig:prowl}
\end{figure}

Initially the  opinion values of all agents are $0$: \  $v(0)=0$ for $v\in U$. The desired outcome of a centralization process studied in this paper is when the opinions of all agents in $V_t$ converge to 1. 
\begin{definition}
We say that the {\em opinions converge} in a decentralized network $\g$ if  $\forall \varepsilon>0$, $\exists t>0$, such that
\begin{equation}\label{eqn:converge}
\forall v\in V_t\colon v(t)\geq 1-\varepsilon.
\end{equation}
In this case, the {\em $(1-\varepsilon)$-centralization cost} of $\g$ is $|S_t|$ where $t$ is the first time when \eqref{eqn:converge} holds.
\end{definition}
The {\em network centralization problem} asks for a prowling algorithm 
to explore a decentralized network $\g$ and select access units that achieve
opinion convergence with the smallest centralization cost.

\paragraph*{\bf Remark.} We now elaborate on the motivation behind the adoption of prowling, rather than other forms of local algorithms, in the network centralization problem: Firstly, the local algorithms as proposed in \citep{borgs2012power} explore a network without going beyond the ``known region''. This means that at every step, the selected access unit will always be a member of $\Access_t(S_{t-1})$. While such approaches strictly adhere to ``accessing only local information'', over large networks, too many access units may be chosen for our purpose.  Secondly, the type of probing algorithms proposed in \cite{wilder2018maximizing} were used to sample paths in an unknown network. 
In this way, the next access unit may not be part of the currently accessible region. However, such a method may incur a high cost when a lot of repeated sampling is used to select one access unit. Moreover, this algorithm was designed for static networks assuming that the network structure stays stable between repeated sampling.  Prowling, on the other hand, strikes a balance between the two types of local algorithms above. It allows the selection of nodes outside of the accessible region {\em as the network evolves}, while not incurring the high cost with repeated sampling.

\subsection{From stabilized to unit-growth networks}\label{sec:unit}
To study the network centralization problem, we first consider the case when the network {\em stabilizes} at some time $\ell>0$, i.e., that $e(t)=e(\ell)$ for all $e\in [U]^2$ and $t>\ell$.
\begin{theorem}[Convergence theorem]\label{Thm:converge} If a decentralized network stabilizes at a time $\ell$ with $S_\ell\neq \varnothing$, then opinions converge. 
\end{theorem}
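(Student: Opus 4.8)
The plan is to observe that, once the network has stabilized, the update rule \eqref{eqn:degroot} is an affine dynamical system whose iteration I must show converges to the all-ones opinion vector. Since $V_\ell$ is finite and the sequence $\{S_t\}_{t\geq \ell}$ is monotone increasing inside $V_\ell$, it stabilizes as well: there is a time $\ell' \geq \ell$ with $S_t = S$ fixed for all $t \geq \ell'$. From $\ell'$ onward both the edge set and the access set are frozen, so I may restrict attention to a time-homogeneous system. Collecting the opinions of the non-committed agents $V_\ell \setminus S$ into a vector $x(t)$, equation \eqref{eqn:degroot} becomes $x(t+1) = A x(t) + b$, where $A \geq 0$ is the matrix of averaging weights and $b \geq 0$ collects the contributions of the committed value $1$. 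I would first record the structural facts: every row of $A$ has sum at most $1$; a row has sum exactly $1$ with $b_v = 0$ precisely when $v \notin \Access(S)$ (here I use $r \geq 1$, so such a $v$ has no committed neighbor and merely averages its neighbors); and a row is \emph{deficient}, with row-sum $< 1$ and $b_v > 0$, precisely when $v \in \Access(S) \setminus S$.

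The next step is to identify the target of convergence. A direct substitution shows the all-ones vector $\mathbf{1}$ is a fixed point, i.e. $(I - A)\mathbf{1} = b$; passing to $y(t) = x(t) - \mathbf{1}$ linearizes the recurrence to $y(t+1) = A y(t)$, so the whole problem reduces to proving $A^n \to 0$, equivalently $\rho(A) < 1$.

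The crux — and the step I expect to be the main obstacle — is showing that the substochastic matrix $A$ is strictly contractive, i.e. that ``mass leaks out'' from every coordinate. For this I would prove a reachability lemma: from every non-committed node there is a path, using only non-committed vertices (equivalently, only nonzero entries of $A$), to a deficient node in $\Access(S) \setminus S$. This is where connectivity of $V_\ell$ and the hypothesis $S \neq \varnothing$ enter: taking a shortest path from an arbitrary non-committed $v$ to the set $S$, all interior vertices are non-committed and the last vertex before $S$ is adjacent to a committed agent, hence lies in $\Access(S)$ and is deficient. Interpreting $A$ as the sub-transition matrix of an absorbing chain, every state thus reaches the ``leak'' within at most $|V_\ell|$ steps, so a uniformly positive fraction of mass is absorbed every $|V_\ell|$ steps; hence $A^{k|V_\ell|}\mathbf{1} \to 0$ as $k \to \infty$, and since $A \geq 0$ this forces $A^n \to 0$ entrywise.

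Finally I would assemble the pieces: $A^n \to 0$ gives $y(t) = A^{t-\ell'} y(\ell') \to 0$, so $x(t) \to \mathbf{1}$ coordinatewise. As $V_\ell$ is finite and the committed agents sit at value $1$ identically, for every $\varepsilon > 0$ there is a time $t$ with $v(t) \geq 1 - \varepsilon$ for all $v \in V_t$, which is exactly condition \eqref{eqn:converge}. I would close with the remark that, since the access set only grows, running the argument at the frozen $S = S_{\ell'}$ is the conservative case: appointing further committed agents only introduces more leakage and cannot slow convergence.
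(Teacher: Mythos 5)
Your proof is correct, but it takes a genuinely different route from the paper's. The paper argues elementarily: after normalizing (w.l.o.g.\ a single access unit $s$ and all other opinions at $0$), it first shows opinions are monotonically non-decreasing via a minimal-counterexample argument (the first node ever to decrease would need a neighbor that decreased earlier), so each $v(t)$ converges to some limit $v^*$; it then rules out $\min_v v^* < 1$ by picking a minimizing node closest to $s$ and observing that averaging against a strictly larger neighboring limit forces the sequence past $y^*$. You instead freeze the system (correctly noting that $S_t$, being monotone inside the finite $V_\ell$, stabilizes at some $\ell'$ --- a point the paper glosses over with its w.l.o.g.), rewrite \eqref{eqn:degroot} as the affine system $x(t+1)=Ax(t)+b$ with substochastic $A$, verify $(I-A)\mathbf{1}=b$, and prove $\rho(A)<1$ via the reachability lemma that every non-committed node has an $A$-positive path to a deficient node in $\Access(S)\setminus S$ (your shortest-path-to-$S$ argument for this is sound, and correctly exploits $r\geq 1$ and connectedness of $V_\ell$). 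Your classification of rows is accurate: nodes outside $\Access(S)$ have row-sum exactly $1$ and $b_v=0$ since they can have no committed neighbor, while every node of $\Access(S)\setminus S$ is deficient because the third case of \eqref{eqn:degroot} always injects the fixed value $1$ into the average. What your approach buys: it handles arbitrary initial opinions in $[0,1]$ without the paper's normalization (on which the paper's monotonicity step genuinely depends --- opinions need not be monotone from general initial conditions), it treats the growing sequence $S_t$ explicitly rather than tacitly fixing a singleton, and it yields a geometric convergence rate; the paper's argument, in exchange, is more elementary, needing no linear algebra or spectral theory, though its final ``eventually surpass $y^*$'' step is left informal where your $A^n\to 0$ conclusion is airtight. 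Your closing remark about extra committed agents being ``conservative'' is superfluous --- once $S_t$ stabilizes at $\ell'$ no further agents appear --- but harmless.
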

\begin{proof}
Let $\iota=\min\{t\mid  S_t\neq \varnothing\}$.
For convenience and w.l.o.g., we prove the case when $\iota=\ell$, $S_\iota=\{s\}$ for some $s\in V_\iota$ and $v(\iota)=0$ for all $v\in V_\iota\setminus\{s\}$.
We show that $\forall v\in V_\iota \forall t\geq \iota$, $v(t+1)\geq v(t)$.
Indeed, for any $v\in V_\iota$, let $t_v=\min_{t}\{t\in \N \mid v(t+1)< v(t),t\geq \iota\}$ (where $\min_{t} \varnothing=\infty$).
Pick $w$ from $\argmin_v \{t_v\mid v\in V_\iota\setminus\{s\}\}$. Suppose $t_w<\infty$.
Note that $t_w>\iota$, as clearly for any $v\in V_\iota\setminus\{s\}$, $v(\iota+1)\geq v(\iota)$.
Since $w(t_w+1)< w(t_w)$ and our model is a DeGroot model, there is a neighbor $x\in N_{t_w}(w)$ such that $x(t_w)<x(t_w-1)$. This means that $t(x)<t(w)$, contradicting the def. of $w$.

Thus, the opinion of each $v\in V_\ell$ is a monotonic sequence $v(\ell)$, $v(\ell+1)$, $\ldots$ bounded by  1. Let $v^*$ be the limit of this sequence. Let $y$ be a node in $\argmin_v v^*$ and assume $y^*<1$. Then take such a $y$ that has the closest distance from $s$. Since our model is a DeGroot model, among the neighbors of $y$, there is an $x\in N_\ell(y)$ with $x^*>y^*$, and all other neighbors $z$ of $y$ have $z^*\geq y^*$. This means that the sequence $v(\ell),v(\ell+1),\ldots$ will eventually surpass $y^*$.
\end{proof}
\begin{corollary}
If a decentralized DeGroot model stabilizes, then one access unit suffices for  opinion convergence. \qed
\end{corollary}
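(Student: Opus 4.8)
The plan is to read the corollary off the Convergence Theorem (Theorem~\ref{Thm:converge}) almost directly, since that theorem requires nothing more of the access set than that it be nonempty. First I would invoke the standing assumptions on the decentralized DeGroot model: at every time the network is a single finite non-trivial connected component, and because edges are incremental, once the network stabilizes at some time $\ell$ the vertex set $V_\ell$ and the whole topology are frozen for all later times. Next I would appoint a single agent $s\in V_\ell$ as the sole access unit; since the centralization process only grows the access set ($S_t\subseteq S_{t+1}$), I can arrange that $s$ is present by time $\ell$, giving $S_\ell=\{s\}\neq\varnothing$. This is precisely the hypothesis of Theorem~\ref{Thm:converge}, so applying it to $\g$ yields opinion convergence while $|S_\ell|=1$; that is, one access unit suffices. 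I note that the proof of Theorem~\ref{Thm:converge} already reduces, without loss of generality, to the singleton case $S_\iota=\{s\}$, so the sufficiency direction carries over verbatim and needs no further work.

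To make the statement sharp I would additionally confirm that one unit is genuinely necessary, which explains why the count is one rather than zero. If $S_\ell=\varnothing$, then every node falls into the pure-averaging branch of~\eqref{eqn:degroot}, and since all opinions start at $0$ the all-zero profile is a fixed point of the dynamics; opinions then remain at $0$ for all time and never reach $1-\varepsilon$ for small $\varepsilon$. Hence no convergence to $1$ is possible with an empty access set, and together with the sufficiency above this shows that exactly one access unit is both necessary and enough.

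I expect no substantive obstacle in this argument. The mathematical content lives entirely in the proof of Theorem~\ref{Thm:converge}, and the only care required here is bookkeeping: checking that the decentralized DeGroot hypotheses (finite, non-trivial, connected $V_\ell$; incremental edges so that stabilization freezes the topology) are in force after time $\ell$, so that the theorem applies verbatim to a singleton access set. This is why the corollary can be stated with an inline \qed.
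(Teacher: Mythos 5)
Your proposal matches the paper exactly: the corollary is stated with an inline \qed{} precisely because it is an immediate application of Theorem~1 with $S_\ell=\{s\}$ nonempty, which is your sufficiency argument. Your extra paragraph on necessity (the all-zero profile being a fixed point when $S_\ell=\varnothing$) is correct but goes beyond the claim, which only asserts that one unit \emph{suffices}.
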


When the network does not stabilize, the convergence theorem no longer holds.
For example, when each time  $t$ involves two (or more) nodes entering $V_t$, it is  impossible to guarantee the convergence of opinions. 
Nevertheless, when we assume that the network $\g=\{G_t\}_{t\in \N}$ has {\em unit-growth}, that is, $V_{t-1}$ expands
by at most one node $v\in U\setminus V_{t-1}$ at each time $t$, it is possible to guarantee convergence of opinions.
In this case, a {\em dominating set}, a key notion in graph theory, plays a crucial role. Recall now the parameter $r$ from Section~\ref{DeGroot}.
The set $S_t$ is called {\em (distance-$r$) dominating} for network instance $G_t$ if $\Access^{(r)}_t(S_t)=V_t$. 
\begin{definition}
 A prowling algorithm is {\em domination-driven} if for any unit-growth decentralized network $\g$  there is a time
$t$ at which the algorithm produces $S_t$ that is dominating  for $G_t$.
\end{definition}
Algorithm~\ref{alg:framework} is a  centralization process that makes use of a domination-driven prowling algorithm. Note that the network evolution (Lines 4 \& 9) is beyond our control. 
The {\bf while}-loop repeatedly executes the $\prowl$ algorithm until $S_t$ dominates.
From then on, the algorithm will select new access units whenever $V_t$ expands.
\begin{definition}
A decentralized network $\g=\{G_t\}_{t\in \N}$ is {\em consistent} with $\prowl$ if each $S_t$, $t\in \N$,  is produced by Algorithm~\ref{alg:framework}. 
\end{definition}

\begin{algorithm}[h]
 \caption{A domination-driven centralization process}\label{alg:framework}
 \begin{algorithmic}[1]
 \item[PARAMETERS] radius $r$, num of steps $k$ of prowling
 \State Decentralized network starts at $G_0$ at time $t\leftarrow 0$
 \State $S_0\leftarrow \varnothing$
 \While{$V_t\setminus \Access_{t}(S_t) \neq \varnothing$}
    \State The network evolves to $G_{t+1}$ and set $t\leftarrow t+1$
    \State $v\leftarrow \prowl(S_{t-1})$ \Comment $k$-step prowl
    \State $S_{t}\leftarrow S_{t-1}\cup \{v\}$
 \EndWhile
 \Repeat
    \State The network evolves to $G_{t+1}$ and set $t\leftarrow t+1$
    \If{$V_t\setminus \Access_t(S_{t-1})\neq \varnothing$} \Comment A new node enters $V_t$
        \State Select $u$ from $V_t\setminus \Access_t(S_{t-1})$ \Comment $u$ is the only new node
        \State $S_t\leftarrow S_{t-1}\cup \{u\}$
    \EndIf
 \Until{Termination} \Comment{The loop runs infinitely as $G_t$ evolves.}
 \end{algorithmic}
 \end{algorithm}

\begin{theorem}[Domination theorem]\label{thm:converge-unit}
If decentralized network $\g$ is consistent with a domination-driven prowl and $\g$ has the unit-growth property, then
in $\g$ opinions converge. 
\end{theorem}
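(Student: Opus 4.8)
The plan is to decompose the argument into a \emph{combinatorial} part, establishing that the access units eventually form and then permanently remain a dominating set, and an \emph{analytic} part, deducing opinion convergence from the existence of a maintained dominating set by re-using the machinery of the Convergence Theorem (Theorem~\ref{Thm:converge}). The combinatorial part exploits the definition of a domination-driven prowl together with the bookkeeping of Algorithm~\ref{alg:framework}; the analytic part adapts the monotonicity argument from the proof of Theorem~\ref{Thm:converge}, now having to account for the fact that $V_t$ keeps expanding.

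First I would handle the combinatorial part. Since $\g$ is consistent with a domination-driven prowl and has unit-growth, the defining property of domination-driven prowling yields a finite time $t^{*}$ at which $\Access_{t^{*}}(S_{t^{*}})=V_{t^{*}}$; this is exactly the termination condition $V_t\setminus\Access_t(S_t)=\varnothing$ of the \textbf{while}-loop, and because $S_0=\varnothing$ while $V_0$ is a non-trivial connected component, the loop executes at least once, so $S_{t^{*}}\neq\varnothing$. I would then prove by induction on $t\geq t^{*}$ that $S_t$ dominates $G_t$. For the inductive step, the network is incremental, so no edge is ever deleted and $\dist_{t+1}(u,v)\leq\dist_t(u,v)$; hence every node of $V_t$ stays within radius $r$ of $S_t\subseteq S_{t+1}$. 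By unit-growth at most one new node $u$ enters, so $V_{t+1}\setminus\Access_{t+1}(S_t)\subseteq\{u\}$. If this set is empty, $u$ is already dominated; otherwise the \textbf{repeat}-loop adds $u$ to $S_{t+1}$, making $u$ an access unit. In every case $\Access_{t+1}(S_{t+1})=V_{t+1}$, closing the induction.

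For the analytic part I would reuse the two ingredients of the proof of Theorem~\ref{Thm:converge}: monotonicity of each node's opinion sequence, and a ``closest minimiser'' argument forcing the common limit to equal $1$. Once domination is in force, every $v\in V_t$ either lies in $S_t$ (opinion fixed at $1$) or has a committed agent within distance $r$, so the DeGroot update in \eqref{eqn:degroot} always feeds the value $1$ into the relevant average and no connected region can stabilise below $1$. In the regime where the network eventually stops growing, this is precisely the hypothesis of Theorem~\ref{Thm:converge} applied from the stabilisation time with $S\neq\varnothing$, and convergence follows at once.

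The step I expect to be the main obstacle is convergence while the network is still expanding. A newly arrived node that is \emph{dominated} (hence not promoted to an access unit) enters with opinion near $0$ and can transiently depress the averages of its neighbours, so the clean global monotonicity of Theorem~\ref{Thm:converge} is lost. I would control this using unit-growth: at most one fresh low-opinion node appears per step, its influence is confined to a bounded neighbourhood, and domination guarantees it is within radius $r$ of a committed agent, so its opinion and the perturbation it causes are driven back toward $1$. The crux is to convert these local recoveries into the single-snapshot statement demanded by \eqref{eqn:converge}: for each $\varepsilon>0$ one must exhibit a \emph{single} time $t$ at which every node of $V_t$ simultaneously exceeds $1-\varepsilon$, which ultimately rests on the growth process leaving a long enough quiescent window (or ceasing), thereby reducing once more to the stabilized analysis of Theorem~\ref{Thm:converge}.
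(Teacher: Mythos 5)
Your combinatorial part matches the paper's: domination-driven prowling yields a time $\ell$ with $S_\ell$ dominating $G_\ell$, and the \textbf{repeat}-loop of Algorithm~\ref{alg:framework} maintains domination by promoting any undominated newcomer — the paper's own proof does exactly this induction. The genuine gap is in your analytic part, and you named it yourself without closing it: your argument ultimately rests convergence on ``a long enough quiescent window (or ceasing)'' of the growth process, which is an extra hypothesis the theorem does not grant. The network may acquire a new node at every time step forever, and in that regime your proposal proves nothing; worse, if dominated newcomers entering with opinion $0$ were really possible at every step, then at every time $t$ some node of $V_t$ would have opinion $0$, and the single-snapshot condition \eqref{eqn:converge} could fail perpetually — so no perturbation-recovery argument of the kind you sketch can succeed on its own.

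The idea you are missing is the one the paper uses to make the growing case degenerate rather than a perturbation problem. Throughout its unit-growth analysis the paper assumes each newcomer $v_t$ arrives \emph{outside} the accessible region, $v_t\notin\Access_t(S_{t-1})$; hence after time $\ell$ every newcomer is immediately selected by the \textbf{repeat}-loop and becomes an access unit, so $S_{t'}$ contains all of $V_{t'}\setminus V_\ell$ and every post-domination node has opinion pinned at exactly $1$ — there are no fresh low-opinion nodes to depress anyone. Moreover, by \eqref{eqn:degroot} the update of a non-access node averages only over its non-$S_t$ neighbours together with the single constant $1$ contributed by its accessor, so newly attached access units do not even alter the update equations of nodes in $V_\ell$: the dynamics restricted to $V_\ell$ is literally that of a network stabilized at time $\ell$ with $S_\ell\neq\varnothing$, and the monotonicity-plus-closest-minimiser argument of Theorem~\ref{Thm:converge} applies verbatim, with no quiescent window needed. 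Condition \eqref{eqn:converge} then holds at any sufficiently late time, since nodes of $V_\ell$ eventually exceed $1-\varepsilon$ while all later nodes sit at opinion $1$. Your instinct about the obstacle was sound — it is exactly why the arrival assumption is essential to the theorem rather than an optional convenience — but the resolution is to invoke that assumption and the structure of \eqref{eqn:degroot}, not to wait for the growth to pause.
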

\begin{proof} Suppose $S_\ell$ is dominating to $G_\ell$ at time $\ell>0$.
At the first time $t>\ell$ when there is a node $v\in V_{t}\setminus \Access_{t}(S_{t-1})$, the algorithm selects the only node $v$ from $V_{t}\setminus \Access_{t}(S_{t-1})$. 
Thus $S_t=S_{t-1}\cup \{v\}$ is dominating to $G_t$. Repeating this procedure, one can prove inductively that for any $t'>\ell$, $S_{t'}$ forms a dominating set to $G_{t'}$ that contains all nodes in $V_{t'}\setminus V_t$. Using a similar argument as in Theorem~\ref{Thm:converge}, we can show that opinions in the sub-network restricted to nodes in $V_\ell$ converge, and hence the theorem holds.
\end{proof}

The {\em domination cost $\delta(\g)$} of a domination-driven $\prowl$ on a unit-growth network $\g=\{G_t\}_{t\in \N}$ consistent with $\prowl$ is the least $t$ at which $S_t$ dominates.  The theorem above links
the centralization and the domination costs: For any $0<\epsilon<1$, let $t_\epsilon$ be the least number of steps for all agents in $G_{\delta(\g)}=\left(V_{\delta(\g)}, E_{\delta(\g)}, C_{\delta(\g)}, S_{\delta(\g)}\right)$ to reach an opinion value of at least $1-\epsilon$. The $(1-\epsilon)$-centralization cost is then no more than $\delta(\g)+t_{\epsilon}$. We will verify in our experiments that the value $t_{\epsilon}$ is negligible (close to 0) even for very small $\epsilon$.

\section{The prowling algorithms}\label{S:Prowling}
In the rest of the paper, we assume that the network has unit growth, and aim to look for a domination-driven prowling algorithm with the smallest domination cost.

\subsection{Exterior-based Prowling}

We exhibit a simple condition where a prowling algorithm is domination-driven.
Define the {\em exterior set} at time $t$ as $\Ext_t=V_t\setminus \Access_t(S_{t-1})$.

\begin{definition}
We say that a prowling algorithm $\prowl(S_{t-1})$ is  {\em exterior-based} if the algorithm returns a node $u_t\in \Ext_t$.
\end{definition}

We aim to prove that exterior-based prowling algorithms are domination-driven if the radius $r\geq 2$. Assume that at each time $t>0$, a new node $v_t\notin \Access_t(S_{t-1})$ enters $V_t$, i.e., $V_{t}=V_{t-1}\cup \{v_t\}$.
%
We need the following.  At time $t$, the {\em access neighborhood} is $N_t=\bigcup_{s\in S_t} N_t(s)$, 
and  {\em anti-neighborhood} is $\Anti_t=V_t\setminus N_t$. We have $N_{t-1}\subseteq N_t$ and $\Anti_t\subseteq \Anti_{t-1}\cup \{v_t\}$.
Note also that $\Anti_t=\varnothing$ implies that $N_t=V_t$, i.e., $S_t$ is a (distance-1) dominating set.

\begin{lemma}\label{lem:neighbor}
For any $t>0$, $|\Anti_{t-1}| - |\Anti_t| \geq   |N_t\setminus N_{t-1}|-1$.
\end{lemma}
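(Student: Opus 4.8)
The plan is to avoid any graph-theoretic case analysis and reduce the claim to a pure cardinality count, based on the two disjoint decompositions $V_{t-1} = N_{t-1} \sqcup \Anti_{t-1}$ and $V_t = N_t \sqcup \Anti_t$, which hold by the very definition $\Anti_\tau = V_\tau \setminus N_\tau$. From these, $|\Anti_{t-1}| = |V_{t-1}| - |N_{t-1}|$ and $|\Anti_t| = |V_t| - |N_t|$, so the quantity we must bound rewrites as $|\Anti_{t-1}| - |\Anti_t| = (|N_t| - |N_{t-1}|) - (|V_t| - |V_{t-1}|)$. Everything then follows from computing the two differences on the right.

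First I would record the two monotonicity facts already noted in the text: edges are never deleted and $S_{t-1} \subseteq S_t$, hence $N_{t-1}(s) \subseteq N_t(s)$ for every old access unit and therefore $N_{t-1} \subseteq N_t$. Consequently $|N_t| - |N_{t-1}| = |N_t \setminus N_{t-1}|$ exactly, independent of whether the new edges appear incident to the incoming node or between two existing nodes. Second, I would invoke the unit-growth hypothesis that a single fresh node $v_t \notin V_{t-1}$ enters, so $|V_t| - |V_{t-1}| = 1$. Substituting both facts into the identity above yields $|\Anti_{t-1}| - |\Anti_t| = |N_t \setminus N_{t-1}| - 1$, which already gives the claimed inequality (in fact with equality).

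The only point that deserves care -- and the reason the statement is phrased with $\geq$ rather than $=$ -- is the status of the incoming node $v_t$ itself. Since $v_t \notin V_{t-1}$ we have $v_t \notin N_{t-1}$, so $v_t$ contributes to $N_t \setminus N_{t-1}$ precisely when it becomes dominated at time $t$; the cardinality bookkeeping above absorbs both possibilities automatically, whereas an attempt to argue directly about the set $\Anti_{t-1} \setminus \Anti_t$ would force a split into the cases $v_t \in N_t$ and $v_t \notin N_t$. Keeping everything at the level of set sizes is what makes the argument a one-line computation and sidesteps this case distinction; it also makes the bound robust to the weaker ``at most one new node'' reading of unit growth, where $|V_t| - |V_{t-1}| \in \{0,1\}$ and the slack in the $-1$ is exactly what is needed to preserve the inequality.
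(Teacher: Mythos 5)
Your proof is correct, and it takes a genuinely more streamlined route than the paper's. The paper argues at the level of sets: it pays the $-1$ up front via $|\Anti_{t-1}| \geq |\Anti_{t-1}\cup\{v_t\}|-1$, then establishes the set identity $(\Anti_{t-1}\cup\{v_t\})\setminus \Anti_t = N_t\setminus N_{t-1}$ through a chain of complement manipulations inside $V_t$, resting on the two facts $\Anti_t\subseteq \Anti_{t-1}\cup\{v_t\}$ and $\Anti_{t-1}\cup\{v_t\}=V_t\setminus N_{t-1}$ (notably, this is exactly the move you anticipate at the end: the paper sidesteps the case split on whether $v_t\in N_t$ by adjoining $v_t$ to $\Anti_{t-1}$ and absorbing the $-1$ there, where your bookkeeping absorbs it in $|V_t|-|V_{t-1}|$). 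You instead work purely with cardinalities from the disjoint decompositions $V_\tau = N_\tau \sqcup \Anti_\tau$, getting $|\Anti_{t-1}|-|\Anti_t| = \left(|N_t|-|N_{t-1}|\right)-\left(|V_t|-|V_{t-1}|\right)$ and finishing in one line. The two arguments consume the same hypotheses --- incremental edges and $S_{t-1}\subseteq S_t$ give $N_{t-1}\subseteq N_t$, which the paper needs implicitly for both of its auxiliary identities (the second also uses $v_t\notin N_{t-1}$, i.e., that $v_t$ is genuinely new) --- but your version buys two things the paper leaves tacit: it makes visible that the lemma holds with \emph{equality} under the standing assumption that exactly one node enters per step (the paper's chain in fact also yields equality, since $v_t\notin\Anti_{t-1}$ makes its opening inequality tight, but this is never remarked), and your observation that the $\geq$ form is precisely what survives the weaker ``at most one new node'' reading of unit growth is accurate and matches the unit-growth setting in which the bound $t+|\Anti_t|$ is deployed in Theorems~\ref{thm:uset} and~\ref{thm:rset}.
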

\begin{proof}
We have the following calculations:

\begin{align}
|\Anti_{t-1}|-|\Anti_t| &\geq |\Anti_{t-1}\cup\{v_t\}|-|\Anti_t|-1 \notag\\
&= |\Anti_{t-1}\!\cup\!\{v_t\}|\!-\!|(\Anti_{t-1}\cup\{v_t\})\!\cap\!\Anti_t|\!-\!1 \label{eqn:anti1}\\
&= |(\Anti_{t-1}\cup\{v_t\})\setminus \Anti_t|-1 \notag\\
 &= |(V_t\setminus N_{t-1}) \setminus (V_t\setminus N_t)|-1 \label{eqn:anti2}\\
 &=|(V_t\setminus (V_t\setminus N_t)) \setminus (N_{t-1}\setminus (V_t\setminus N_t))|-1 \notag\\
 &=|N_t\setminus (N_{t-1}\cap N_t)|-1=|N_t\setminus N_{t-1}|-1 \notag
\end{align}
where \eqref{eqn:anti1} is derived from $\Anti_t=(\Anti_{t-1}\cup\{v_t\})\cap \Anti_t$ and \eqref{eqn:anti2} from $\Anti_{t-1}\cup \{v_t\} = V_t\setminus N_{t-1}$.
\end{proof}

\begin{theorem}\label{thm:uset}
Any exterior-based prowling algorithm with the radius $r\geq 2$ is domination-driven. Moreover, the domination cost is no more than $t+|\Anti_t|$ for any $t>0$.
\end{theorem}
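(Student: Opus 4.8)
The plan is to show that an exterior-based prowling algorithm must eventually produce a dominating set, and to bound how long this takes using Lemma~\ref{lem:neighbor}. The key observation is that $S_t$ is a distance-$1$ dominating set exactly when $\Anti_t=\varnothing$, and since $r\geq 2$ a distance-$1$ dominating set is certainly distance-$r$ dominating. So it suffices to drive $|\Anti_t|$ down to $0$. First I would argue that each step of an exterior-based prowl makes genuine progress: at time $t$ the algorithm returns a node $u_t\in \Ext_t = V_t\setminus\Access_t(S_{t-1})$, so $u_t$ is added to $S_t$ and in particular $u_t\in N_t\setminus N_{t-1}$ (since $u_t\notin\Access_t(S_{t-1})\supseteq N_{t-1}$, using $r\geq 1$, yet $u_t\in N_t(u_t)\subseteq N_t$). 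This guarantees $|N_t\setminus N_{t-1}|\geq 1$, and combined with Lemma~\ref{lem:neighbor} we get $|\Anti_{t-1}|-|\Anti_t|\geq |N_t\setminus N_{t-1}|-1\geq 0$, so the sequence $\{|\Anti_t|\}$ is non-increasing.

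Non-increasing alone is not enough, so the heart of the argument is to rule out stagnation. The plan is to show that whenever $\Anti_t\neq\varnothing$, the next prowling step strictly decreases $|\Anti|$. Here I would use the key fact that an exterior-based prowl returns a node $u_t\in\Ext_t$ together with the unit-growth assumption $V_t=V_{t-1}\cup\{v_t\}$. When a new node $v_t$ enters, the chosen access unit $u_t\in\Ext_t$ adds its whole closed neighborhood to $N_t$; because $u_t$ itself was previously in $\Anti_{t-1}$ (it lies outside $\Access_t(S_{t-1})\supseteq N_{t-1}$), adding $u_t$ to $S_t$ moves at least $u_t$ out of the anti-neighborhood, so $N_t\setminus N_{t-1}$ contains at least $u_t$ and typically more. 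Pairing this with the bound from Lemma~\ref{lem:neighbor}, I expect to conclude that as long as the exterior is non-empty the potential $|\Anti_t|$ strictly drops, except possibly at steps where $|N_t\setminus N_{t-1}|=1$ (the node chosen covers only itself). The careful part is accounting for the $-1$ slack in the lemma against the simultaneous arrival of the new node $v_t$, which can increase $|\Anti|$ by at most one; these two effects balance so that net progress is never negative and is strictly positive once the new node's neighborhood overlaps existing coverage.

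For the quantitative bound, I would telescope the inequality. Fix any $t>0$. Over the subsequent steps $t, t+1, \dots$, each step either terminates (when $\Anti=\varnothing$, giving a dominating set) or decreases the potential. Since $|\Anti_t|$ is a non-negative integer and the unit-growth process adds at most one node per step while each prowling step removes at least one node from the anti-neighborhood, after at most $|\Anti_t|$ additional steps the anti-neighborhood is exhausted. This yields a domination cost of at most $t+|\Anti_t|$, matching the claimed bound. The formal write-up amounts to iterating Lemma~\ref{lem:neighbor} from time $t$ and observing that the sum of the per-step decreases must reach $|\Anti_t|$ within $|\Anti_t|$ steps.

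The main obstacle I anticipate is the edge case analysis at steps where only one new node's coverage is gained, i.e.\ where Lemma~\ref{lem:neighbor} gives only $|\Anti_{t-1}|-|\Anti_t|\geq 0$ rather than a strict decrease. I would need to argue that such ``flat'' steps cannot persist indefinitely while $\Anti_t\neq\varnothing$: because the returned node lies in $\Ext_t$ and $r\geq 2$, the chosen access unit brings its second-order neighborhood into the accessible region, forcing coverage of genuinely new territory and hence eventual strict progress. Making this reduction from distance-$2$ accessibility to distance-$1$ domination precise — ensuring that an exterior node can always be found until $\Anti$ empties, and that $r\geq 2$ rather than $r\geq 1$ is what prevents the algorithm from getting stuck covering only isolated singletons — is where I expect the real work to lie.
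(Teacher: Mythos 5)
There is a genuine gap, and you have in fact flagged it yourself without closing it: your argument only establishes $|N_t\setminus N_{t-1}|\geq 1$ (the chosen node $u_t$ covers itself), which through Lemma~\ref{lem:neighbor} yields merely $|\Anti_{t-1}|-|\Anti_t|\geq 0$, i.e.\ a non-increasing potential. Your plan to handle the ``flat'' steps --- arguing that they ``cannot persist indefinitely'' and that strict progress happens ``eventually'' --- would not suffice even if made precise, because the stated bound $t+|\Anti_t|$ on the domination cost requires a strict decrease of $|\Anti|$ at \emph{every} step after $t$; your telescoping paragraph silently assumes exactly this (``each prowling step removes at least one node from the anti-neighborhood''), which is the unproven claim. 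An eventual-progress argument would only give domination-drivenness in the limit, not the quantitative bound, and your heuristic trigger for strict progress (``once the new node's neighborhood overlaps existing coverage'') is neither the right condition nor established.

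The missing idea, which is where $r\geq 2$ actually enters, is a short distance argument showing the flat steps never occur at all. Since $u_t\in\Ext_t$ means $\dist_t(u_t,s)>r\geq 2$ for every $s\in S_{t-1}$, take a shortest path from $u_t$ to some $s\in S_{t-1}$ (it exists because $G_t$ is connected) and let $w$ be the neighbor of $u_t$ on it; then $\dist_t(w,s')\geq \dist_t(u_t,s')-1\geq r\geq 2$ for all $s'\in S_{t-1}$, so $w\notin N_{t-1}$, while $u_t\notin N_{t-1}$ as you observed. Since $u_t\in S_t$, both $u_t$ and its neighbor $w$ lie in $N_t$, giving $|N_t\setminus N_{t-1}|\geq 2$ at every step, and Lemma~\ref{lem:neighbor} then yields the strict inequality $|\Anti_t|<|\Anti_{t-1}|$ unconditionally. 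The bound $t+|\Anti_t|$ and domination-drivenness follow immediately, with no edge-case analysis needed; note this also isolates why $r\geq 1$ would not do, since with $r=1$ the neighbor $w$ could already lie in $N_{t-1}$ and the $-1$ slack in the lemma would exactly cancel the gain.
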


\begin{proof}
Let $u_t$ be the node returned by the prowling algorithm $\prowl(S_{t-1})$ when it is executed at time $t$.
By assumption $u_t\in \Ext_t$. Take a shortest path from $u_t$ to a node $s\in S_{t-1}$ and let $w$ be the node adjacent to $u_t$ on that path. It is clear that $w\notin \Access_t(S_{t-1})$ and $w\in \Access_t(\{u_t\})$. This means that $\{u,w\}\subseteq N_t$ while $\{u,w\}\cap N_{t-1}=\varnothing$. Hence $|N_t\setminus N_{t-1}|\geq 2$. By Lemma~\ref{lem:neighbor}, $|\Anti_{t-1}| - |\Anti_t| \geq   |N_t\setminus N_{t-1}|-1=1$ and thus $|\Anti_{t}|< |\Anti_{t-1}|$. Therefore for some $t'>0$, $\Anti_{t'}=\varnothing$ which means that $S_{t'}$ is dominating. Moreover, at any time $t$, the number of times before the anti-neighborhood becomes $\varnothing$ is at most $|\Anti_t|$. Hence the theorem is proved.
\end{proof}

By Theorem~\ref{thm:uset}, a centralization process consistent with an exterior-based prowling algorithm (with $r\geq 2$) 
guarantees to produce a dominating set over a unit-growth decentralized network. Furthermore, the theorem also provides a sequence of upper bounds on the domination cost, i.e.,
\[
1+|\Anti_1|, 2+|\Anti_2|, 3+|\Anti_3|,\ldots
\]
To lower domination cost, it makes sense to minimize values in this sequence. As $V_t$ grows at a constant rate, this is equivalent to making the access neighborhood $N_1, N_2, N_3, \ldots$ grow as fast as possible. This naturally suggests selecting access units based on their {\em degrees}. We thus design the following  exterior-based algorithm:

\begin{definition} We define $\degr(S_{t-1})$ the $k$-step {\em degree prowling algorithm}: Start from a random node $w_0$ in $\Access_t(S_{t-1})$ that has an edge going out of $\Access_t(S_{t-1})$, and move to a random neighbor $w_1\notin \Access_t(S_{t-1})$.  From $w_i$ ($1\leq i<k$), the algorithm selects among neighbors outside of $\Access_{t}(S_{t-1})$ one that (1) has a higher degree than $w_i$, and (2) the degree is as high as possible, i.e., set   $w_{i+1}=\argmax\{\degree(v)\mid v\in \Ext_t, \degree(v)>\degree(w_i)\}$. This step is repeated for at most $k-1$ steps or until it is not possible to do so within $k$ steps.
\end{definition}
The following directly follows from Thm.~\ref{thm:uset} and Thm.~\ref{thm:converge-unit}.
\begin{corollary}
If a unit-growth decentralized network $\g$ is consistent with  $\degr(S_{t-1})$ with $r\geq 2$, then in $\g$ opinions converge. \qed
\end{corollary}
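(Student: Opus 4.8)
The plan is to recognize this statement as a straightforward composition of two results already established, so that the entire burden reduces to verifying that $\degr(S_{t-1})$ meets the hypothesis of Theorem~\ref{thm:uset}: that it is an \emph{exterior-based} prowling algorithm. Once that single fact is in hand, Theorem~\ref{thm:uset} (with $r\geq 2$) upgrades $\degr$ to a domination-driven prowl, and Theorem~\ref{thm:converge-unit} then delivers opinion convergence from the unit-growth and consistency assumptions. I would therefore open the proof by stating this two-step reduction explicitly and spend the substance of the argument on the exterior-based check.

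To verify that $\degr$ is exterior-based, I would trace the node it returns and confirm it always lies in $\Ext_t=V_t\setminus\Access_t(S_{t-1})$. The starting node $w_0$ is chosen inside $\Access_t(S_{t-1})$, but the first move goes to a neighbor $w_1\notin\Access_t(S_{t-1})$, so $w_1\in\Ext_t$. Every subsequent step selects $w_{i+1}$ only among neighbors \emph{outside} $\Access_t(S_{t-1})$, whence inductively $w_i\in\Ext_t$ for all $i\geq 1$. The returned node is whichever $w_i$ the process halts on, and since the algorithm never re-enters the accessible region after its first move, this node is in $\Ext_t$ regardless of whether termination occurs after the full $k$ steps or earlier because no strictly-higher-degree exterior neighbor exists. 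I would flag this early-termination branch as the one place the reasoning could slip: one must note that premature stopping leaves us at some $w_i$ with $i\geq 1$, which is still exterior, rather than forcing a return to $w_0$.

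A small well-definedness point also deserves a sentence: the algorithm presupposes a starting node $w_0\in\Access_t(S_{t-1})$ having an edge leaving $\Access_t(S_{t-1})$. Since $\degr$ is invoked only while the loop condition $V_t\setminus\Access_t(S_t)\neq\varnothing$ of Algorithm~\ref{alg:framework} holds, we have $\Ext_t\neq\varnothing$; and as $G_t$ is a single connected component, there must be a boundary edge from $\Access_t(S_{t-1})$ to $\Ext_t$, so a valid $w_0$ exists.

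Assembling the chain completes the argument: $\degr$ is exterior-based, hence by Theorem~\ref{thm:uset} with $r\geq 2$ it is domination-driven; since $\g$ is unit-growth and consistent with $\degr$, Theorem~\ref{thm:converge-unit} yields opinion convergence. It is worth remarking that the degree-maximizing preference built into $\degr$ plays no role whatsoever in the convergence claim---it influences only the domination-cost bound $t+|\Anti_t|$ from Theorem~\ref{thm:uset}---which is precisely why the statement is a corollary ``by design''. I do not anticipate a genuine obstacle here; the only real care lies in the exterior-based verification and its early-termination edge case.
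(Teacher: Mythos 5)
Your proposal is correct and follows exactly the paper's route: the paper dispatches this corollary with the one-line observation that it ``directly follows from Thm.~\ref{thm:uset} and Thm.~\ref{thm:converge-unit},'' which is precisely your two-step composition. Your explicit verification that $\degr$ returns a node of $\Ext_t$ (including the early-termination case, where the walk halts at some $w_i$ with $i\geq 1$) merely spells out the detail the paper leaves implicit, so there is no substantive difference in approach.
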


\subsection{B-set Prowling}

The major downside of the exterior-based algorithms is that prowling is confined to the induced subgraph of the exterior set $\Ext_t$, which forbids more central nodes from being selected. This is problematic, especially after several access units have been selected, when the exterior subgraph is fragmented and contains many isolated nodes. In such a case, the algorithm may have a high domination cost. Fig.~\ref{fig:example} shows an example where an exterior-based algorithm performs poorly. Thus it is important to enable prowling to move {\em within the accessible region}.


\begin{figure}[ht!]
  \centering
  \includegraphics[width=0.5\linewidth]{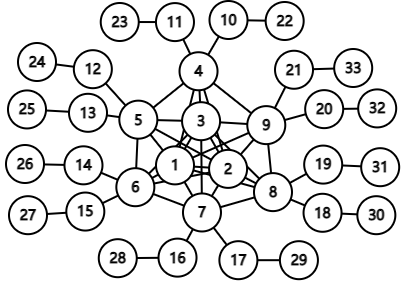}
  \caption{A static network with 33 nodes and 51 edges 
  with $r=2$, the optimal solution has 6 nodes, e.g., $\{4,5,6,7,8,9\}$. W.l.o.g, assume node $4$ is the initial node for all algorithms. Then
  the exterior-based algorithms outputs 11 nodes, e.g., $\{4,14,15,16,17,18,19,24,25,32,33\}$.}
  \label{fig:example}
\end{figure}

We therefore define {\em B-set prowling algorithms}. Recall that $r$ is the radius of the accessible region and set $r\dotdiv 1=r-1$ if $r\geq 2$ and $r\dotdiv 1=1$ if $r=1$.  For any time $t>0$, define the set
\begin{equation}\label{eqn:Bt}
B_t = \{v\in V_t\mid \dist(v,\Ext_t)\leq r\dotdiv 1\}.
\end{equation}
\begin{definition}
A {\em B-set prowling algorithm} $\prowl(S_{t-1})$ is one that returns a node $u_t\in B_t$.
\end{definition}

\begin{theorem}\label{thm:rset}
If $r\geq 2$, then a B-set prowling algorithm is domination-driven. Moreover, the domination cost is no more than $t+|\Anti_t|$ for any $t>0$.
\end{theorem}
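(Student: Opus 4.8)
The plan is to mirror the proof of Theorem~\ref{thm:uset}: I will show that whenever a B-set prowling algorithm appends a new access unit $u_t \in B_t$, the access neighborhood gains at least two fresh nodes, i.e. $|N_t \setminus N_{t-1}| \geq 2$. Granting this, Lemma~\ref{lem:neighbor} immediately gives $|\Anti_{t-1}| - |\Anti_t| \geq |N_t \setminus N_{t-1}| - 1 \geq 1$, so $|\Anti_0|, |\Anti_1|, \ldots$ is a strictly decreasing sequence of non-negative integers as long as the algorithm keeps adding units, and hence drops to $0$ after at most $|\Anti_t|$ further steps counted from any time $t$. Since $\Anti_{t'} = \varnothing$ is equivalent to $N_{t'} = V_{t'}$, i.e. to $S_{t'}$ being a distance-$1$ (and therefore, as $r \geq 1$, distance-$r$) dominating set, this proves both that the algorithm is domination-driven and that the domination cost is at most $t + |\Anti_t|$.

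The heart of the argument is the inequality $|N_t \setminus N_{t-1}| \geq 2$. By the definition of $B_t$, and using $r \dotdiv 1 = r-1$ since $r \geq 2$, the returned node $u_t$ has a witness $z \in \Ext_t$ with $\dist_t(u_t, z) \leq r-1$; and $z \in \Ext_t$ means $\dist_t(z, S_{t-1}) \geq r+1$. The triangle inequality then forces $\dist_t(u_t, S_{t-1}) \geq (r+1)-(r-1) = 2$, so $u_t$ itself lies at distance at least $2$ from $S_{t-1}$ and enters $N_t$ only because $u_t$ has just become an access unit. For a second fresh node I take a neighbor $w$ of $u_t$ pointing ``away'' from $S_{t-1}$: if $u_t \neq z$ I let $w$ be the first vertex on a shortest $u_t$--$z$ path, so $\dist_t(w, z) \leq r-2$ and hence $\dist_t(w, S_{t-1}) \geq \dist_t(z, S_{t-1}) - \dist_t(w, z) \geq (r+1)-(r-2) = 3$; in the degenerate case $u_t = z \in \Ext_t$ we already have $\dist_t(u_t, S_{t-1}) \geq r+1 \geq 3$, so any neighbor $w$ of $u_t$ satisfies $\dist_t(w, S_{t-1}) \geq 2$. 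In either case $w \in N_t(u_t) \subseteq N_t$, $w \neq u_t$, and both $u_t$ and $w$ sit at distance at least $2$ from $S_{t-1}$.

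The one point needing care -- and the main (if minor) obstacle -- is that ``freshness'' must be measured against $N_{t-1}$, the distance-$1$ neighborhood computed in $G_{t-1}$, not against the current region $\Access_t(S_{t-1})$. Here I invoke the incremental-edge assumption $E_{t-1} \subseteq E_t$, which yields $\dist_{t-1}(x, S_{t-1}) \geq \dist_t(x, S_{t-1})$ for every $x \in V_{t-1}$ (while $x \notin N_{t-1}$ trivially when $x \notin V_{t-1}$). Since I have established $\dist_t(u_t, S_{t-1}) \geq 2$ and $\dist_t(w, S_{t-1}) \geq 2$, it follows that neither $u_t$ nor $w$ lies in $N_{t-1}$, and as both lie in $N_t$ this gives the desired $|N_t \setminus N_{t-1}| \geq 2$.

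Finally I would emphasize why two fresh neighbors are needed rather than one: under unit growth a single new node $v_t$ enters at each step, and $\Anti_t \subseteq \Anti_{t-1} \cup \{v_t\}$, so this ``$+1$'' is exactly the slack absorbed by the ``$-1$'' in Lemma~\ref{lem:neighbor}; obtaining a net decrease of $|\Anti_t|$ therefore requires $|N_t \setminus N_{t-1}| \geq 2$. Combined with Theorem~\ref{thm:converge-unit}, this domination guarantee is precisely what drives opinion convergence, completing the argument.
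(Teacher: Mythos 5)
Your proposal is correct and takes essentially the same route as the paper's own proof: the same case split on whether $u_t$ coincides with its witness in $\Ext_t$, the same triangle-inequality argument establishing $|N_t\setminus N_{t-1}|\geq 2$, and the same appeal to Lemma~\ref{lem:neighbor} to force $|\Anti_t|$ to strictly decrease, yielding the $t+|\Anti_t|$ bound. The only differences are cosmetic: you measure distances from $S_{t-1}$ rather than from the witness node, and you make explicit (via $E_{t-1}\subseteq E_t$) the time-index subtlety that $N_{t-1}$ is computed in $G_{t-1}$ while the distance bounds are derived in $G_t$ --- a detail the paper's proof leaves implicit, so this is a small improvement in rigor rather than a different argument.
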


\begin{proof}
Suppose $r\geq 2$. Write $u_t$ for the node returned by the B-set $\prowl(S_{t-1})$. By \eqref{eqn:Bt}, there is a node $w\in \Ext_t$ such that $\dist_t(u_t,w)\leq r-1$. 

We first discuss the case where $u_t=w$. In this case, $u_t\in \Ext_t$, implying that $u_t\notin N_{t-1}$. Notice that $\dist_t(u_t,S_{t-1})>r$, meaning that for any a node $s\in S_{t-1}$, $\dist_t(u_t,s)>r$. Take a shortest path between $s$ and $u_t$, and let $x$ be the node on this path that is adjacent to $u_t$. By triangle inequality, $\dist_t(x,s)\geq \dist_t(s,u_t)-1\geq r$. Note that $s$ is any a node in $S_{t-1}$, thus we have $\dist_t(u_t,S_{t-1})\geq r$, meaning that $x\notin N_{t-1}$. On the other hand, $u_t\in S_t$, implying that $\{u_t,x\}\subseteq N_t$. Thus $|N_t\setminus N_{t-1}|\geq 2$.

We then consider the case where $u_t\neq w$. In this case, we take a shortest path between $w$ and $u_t$ and let $y$ be the node on this path that is adjacent to $u_t$. Note that $y$ may be the same as $w$. 
Since $\dist_t(w,S_{t-1})> r$, we have  $\dist_t(v,w)>r-1\geq 1$ for any $v\in N_{t-1}$. Hence as $\dist_t(u_t,w)\leq r-1$, $u_t\notin N_{t-1}$. Also, $\dist_t(y,w)\leq r-2$ implies that $y\notin N_{t-1}$.  On the other hand, $u_t\in S_t$, implying that $\{u_t,y\}\in N_t$. Thus $|N_t\setminus N_{t-1}|\geq 2$.

In either case, $|N_t\setminus N_{t-1}|\geq 2$. By Lem.~\ref{lem:neighbor}, $|\Anti_t|\!<\!|\Anti_{t-1}|$. Thus in a similar argument as in the proof of Thm.~\ref{thm:uset}, there is a time $t'>0$ where $\Anti_{t'}=\varnothing$ and $S_{t'}$ is dominating. Moreover, the dominating cost can be analyzed in the same way as in Theorem~\ref{thm:uset}.
\end{proof}

We now define $\boomerang(S_{t-1})$, a $k$-step prowling algorithm.
Recall that $k$ is the maximum number of steps in each call to prowling. 

\begin{definition} The $k$-step {\em B-degree algorithm} $\boomerang(S_{t-1})$ starts from a random node $w_0$ in $\Access_t(S_{t-1})$ that has an edge going out of $\Access_t(S_{t-1})$, and moves to a random neighbor $w_1\notin \Access_t(S_{t-1})$.  From $w_i$ ($1\leq i<k$), the algorithm selects among neighbors in $B_t$ one that (1) has a higher degree than $w_i$, and (2) the degree is as high as possible, i.e., set   $w_{i+1}=\argmax\{\degree(v)\mid v\in B_t, \degree(v)>\degree(w_i)\}$. This step is repeated for at most $k-1$ steps or until it is not possible to do so within $k-1$ steps.
\end{definition}

Note that the upper bounds of the dominating costs, as asserted by Theorem~\ref{thm:rset}, imply that it is preferred to make the access neighborhood $N_1, N_2, N_3, \ldots$ grow as fast as possible. Maximizing the degrees of the selected nodes does not optimally speed up the growth of $N_t$; Instead we define {\em N-degree} of a node $v$ as the number $\degN_t(v)$ of its neighbors that are not in $N_t$. 

\begin{definition} The $k$-step {\em B-Ndegree algorithm} $\bndeg(S_{t-1})$ starts from a random node $w_0$ in $\Access_t(S_{t-1})$ that has an edge going out of $\Access_t(S_{t-1})$, and moves to a random neighbor $w_1\notin \Access_t(S_{t-1})$.  From $w_i$ ($1\leq i<k$),  among neighbors of $w_i$ in $B_t$, do: 
\begin{itemize}[leftmargin=*]
\item If $w_i\in \Access_t(S_{t-1})$, then set $w_{i+1}$ as the neighbor of $w_i$ that has a higher N-degree and $\degN_t(w_{i+1})$ is maximized. 
\item if $w_i\notin \Access_t(S_{t-1})$ (note that in this case, the N-degree of neighbors of $w_i$ is not accessible), then set $w_{i+1}$ as the neighbor of $w_i$ that has a higher degree and $\degree_t(w_{i+1})$ is maximized. 
\end{itemize}
This step is repeated for at most $k-1$ steps or until it is not possible to do so within $k-1$ steps.
\end{definition}
The following directly follows from Thm.~\ref{thm:rset} and Thm.~\ref{thm:converge-unit}.

\begin{corollary}
If a  unit-growth decentralized network $\g$ is consistent with $\boomerang(S_{t-1})$ or $\bndeg(S_{t-1})$ with $r\geq 2$, then in $\g$ opinions converge. \qed
\end{corollary}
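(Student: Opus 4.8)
The plan is to reduce the statement to the two theorems already in hand, so that the only genuine content is to certify that both $\boomerang(S_{t-1})$ and $\bndeg(S_{t-1})$ satisfy the hypothesis of Theorem~\ref{thm:rset}, namely that each is a B-set prowling algorithm in the sense that its returned node $u_t$ always lies in $B_t$. Once this is established, Theorem~\ref{thm:rset} (invoking $r\geq 2$) gives that each algorithm is domination-driven, and then Theorem~\ref{thm:converge-unit}, applied to the unit-growth network $\g$ consistent with that algorithm, immediately yields that opinions converge. This mirrors exactly the chaining used for the $\degr$ corollary.

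First I would verify the membership condition for $\boomerang$. By definition the walk begins at $w_0\in\Access_t(S_{t-1})$ and moves to $w_1\notin\Access_t(S_{t-1})$, so $w_1\in\Ext_t$ and hence $\dist(w_1,\Ext_t)=0\leq r\dotdiv 1$, placing $w_1\in B_t$. Every subsequent move is, by construction, to a neighbor lying in $B_t$; consequently each $w_i$ with $i\geq 1$ is in $B_t$, and this holds in particular for the node finally returned, whether the walk runs the full $k$ steps or halts earlier because no admissible higher-degree neighbor in $B_t$ exists. Thus $u_t\in B_t$, so $\boomerang$ is a B-set prowling algorithm.

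The same verification applies verbatim to $\bndeg$: its first move again lands in $\Ext_t\subseteq B_t$, and each of its two branches selects the next node among the neighbors of $w_i$ that lie in $B_t$. Hence its returned node is in $B_t$ as well. With both algorithms confirmed to be B-set prowling, the corollary follows by chaining Theorem~\ref{thm:rset} and Theorem~\ref{thm:converge-unit}.

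I do not anticipate a real obstacle, since the difficulty of the statement is entirely carried by the two cited theorems; the only point requiring care is the bookkeeping that the returned node genuinely lies in $B_t$ in the degenerate cases --- a one-step walk, or an early halt --- which is why I would state the base case $w_1\in\Ext_t\subseteq B_t$ explicitly rather than rely on the generic ``selects a neighbor in $B_t$'' clause.
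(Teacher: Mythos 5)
Your proposal is correct and follows essentially the same route as the paper, which justifies the corollary in one line by chaining Theorem~\ref{thm:rset} (B-set prowling with $r\geq 2$ is domination-driven) with Theorem~\ref{thm:converge-unit}. Your explicit check that $\boomerang$ and $\bndeg$ return nodes in $B_t$ --- noting $w_1\in\Ext_t\subseteq B_t$ and that all later moves select neighbors in $B_t$, including at early halts --- is precisely the implicit content behind the paper's ``directly follows'' claim.
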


\section{Experiments}\label{sec:experiment}
\subsection{Setup}
We perform a series of experiments to validate the performance of our prowling algorithms. 
These experiments serve the following purposes. The first is to test the convergence of the algorithms over real-world dynamic networks. The second is to compare the domination costs of different prowling algorithms. The third is to test the effectiveness of the algorithms in facilitating opinion convergence. 

\paragraph*{\bf Algorithms.} We run Algorithm~\ref{alg:framework} using the three  proposed prowling algorithms: $\degr$, $\boomerang$, and $\bndeg$. To put them in perspective, we also test the following benchmarks:
\begin{itemize}[leftmargin=*]
\item {\bf Local random algorithm ($\lrand$):} This is a local  algorithm that picks a random node adjacent to $\Access_t(S_{t-1})$ as the next access unit. It does not use prowling. 

\item {\bf Local degree algorithm ($\local$):} This local algorithm resembles the one proposed in \cite{borgs2012power}. When selecting the $t$th access unit, it picks the node $u\in \Access_t(S_{t-1})$ that has the highest N-degree. This, just like $\lrand$, also does not use prowling.

\item {\bf Exterior random algorithm ($\random$):} This prowling algorithm starts by moving out of $\Access_t(S_t)$ and performs a random walk (of length $k-1$) in the exterior set $X_t$ to choose the next access unit. In this way, the algorithm can be considered exterior-based, and thus  Thm~\ref{thm:uset} asserts it is domination-driven. 

\item {\bf B-set random algorithm ($\brandom$):} This prowling algorithm is defined in the same way as $\random$ except that the random walk takes place in $B_t$ rather than $X_t$. In this way, the algorithm is a B-set prowling algorithm, and by Thm.~\ref{thm:rset} it is domination-driven.
\end{itemize}

\paragraph*{\bf Real-world datasets.} We utilize four real-world datasets in our experiments. 
{\em Facebook-WOSN (Facebook)} records a social network on Facebook with time stamps \cite{viswanath-2009-activity}. An edge $\{u,v\}$ denotes a friendship relation between two users. {\em  Wiki-talk (Wikitalk)} is a timestamped communication network of Basque Wikipedia.
An edge $\{u,v\}$ denotes that user $u$ wrote a message on the talk page of user $v$. {\em Citation-HepPh (Citation)} is a high-energy physics citation network \cite{leskovec2007graph}, where an edge $\{u,v\}$ denotes a citation relationship between papers $u$ and $v$. 
{\em Enron-Email (Enron)} is a timestamped email network between the
employees of a company \cite{klimt2004enron} where an edge $\{u,v\}$ denotes emails between employees $u$ and $v$.
Table~\ref{tab:dataset} shows key statistics of these datasets. 
Note that these datasets involve only incremental changes of the networks. To ensure that the network has a sufficient size as well as a sufficient number of time stamps, we begin the centralization process from different initial time stamps. Moreover, to approximate unit growth, we run the prowling algorithm (thereby selecting an access unit) for every 13,1,1,2 times stamps for Facebook, Wikitalk, Citation, and Enron, respectively.

\begin{table}[t]
  \centering \footnotesize
  \caption{\footnotesize Key statistics of the real-world networks. The last six rows are properties about the last time stamp. A high {\em clustering coefficient} and low {\em diameter} indicate a small-world network.}\label{tab:dataset}
  \begin{tabular}{|l|c|c|c|c|}
  \hline  &Facebook&Wikitalk&Citation&Enron \\ \hline
  \hline \# of time stamps &876,992&55,197&3,308&220,363 \\
  \hline initial time stamp & 400,000& 27,000 & 375 & 100,000 \\ 
  \hline initial $|V|$ &21,059&15,807&14,526&45,596  \\
  \hline initial $|E|$  &81,473 &17,765&30,815&156,596 \\
  \hline final $|V|$ &63,731&40,993&34,449&87,273  \\
  \hline final $|E|$  &193,493&45,682&405,823&299,219 \\
  \hline average degree &8.24&2.23&23.56&6.86  \\
  \hline clustering coefficient &0.11&0.02&0.28&0.12  \\
  \hline max degree &223&29,479&846&1,728 \\
  \hline diameter &15&7&15&14 \\
\hline
  \end{tabular}
\end{table}

\paragraph*{\bf Dynamic models.} We also run four models of dynamic networks. The goal is to test the performance and robustness of our algorithms over networks with different structural properties. {\em Dynamic Barabasi-Albert model (BA).} This produces standard scale-free networks through preferential attachment \cite{barabasi1999emergence}. The model adds a new node at each time stamp which randomly connects with $\overline d/2$ nodes, where $\overline d$ is the average degree. The network instances will exhibit a power-law degree distribution.   {\em Dynamic stochastic block model (SB)} is well-used for generating a community structure \cite{holland1983stochastic}. We utilize this model to generate networks with three equal-size communities. 
At each time stamp, a new node enters a random community. It then links to nodes in the same community with a high probability $p_{in}$, and to nodes in other communities with a low probability $c p_{in}$ with $c=1/6$ to mimic real-world networks as shown in \cite{holland1983stochastic}. 
{\em Dynamic Jackson-Rogers model (JR).} This model, proposed by \cite{jackson2007meeting}, simulates stochastic friendship-making among individuals in a population. The network instances meet both scale-free and small-world properties. The dynamic version of the model was introduced in \cite{ijcai2018-544}. There are two parameters $p$ and $\overline{d}$. As in \cite{ijcai2018-544}, we pick $p=0.5$. 
{\em Dynamic rich-club model (RC).} The model develops a typical core/periphery structure with a dense and central core with a sparse periphery \cite{bornholdt2001world,csermely2013structure}. At each timestamp, the model adds a new node with probability $\alpha\in[0,1]$ or an edge between two existing nodes with probability $1-\alpha$, where $\alpha\!=\!2\!/\!\overline d$. The dynamic version of this model was introduced in \cite{ijcai2018-544}. 
For easier comparison, we classify networks generated by these models using two parameters: the initial size $N$ and the average degree $\overline{d}$. 
Table~\ref{tab:models} summarizes key statistics of the models by setting the average degree\footnote{80 datasets on KONECT and SNAP have average degrees between 2 and 10 \url{http://konect.uni-koblenz.de/}, \url{http://snap.stanford.edu/}} $\overline d=6$ and network size $N=5000$. For these models, there is no upper bound on the number of time stamps that we can generate.

\begin{table}[t]
  \centering \footnotesize
  \caption{\footnotesize Key statistics of dynamic network models with $\overline d=6$ and $N=5000$}\label{tab:models}
  \begin{tabular}{|l|c|c|c|c|c|c|c}
\hline  &BA&SB&JR&RC\\ \hline 
 \hline clustering coefficient&0.01&0.0015&0.34&0.01   \\
 \hline max degree &202&16&52 &107 \\
 \hline diameter &8&12&10&13\\
\hline
\end{tabular}
\end{table}

\paragraph*{\bf Experiment design.} We perform three experiments. 
Exp.~1 compares the domination (and centralization) costs of different prowling algorithms over real-world networks. For this, we vary the radius $r$ and steps $k$ of prowling. For each dataset, we  start from the specified initial time stamp with no accessing units, running prowling until the access units form a dominating set. At any time step, we assume that $G_{t+1}=G{t}$ if no successive timestamp is provided in the dataset.    
Exp.~2 compares the algorithms over networks generated by dynamic network models. Here, we evaluate the domination costs with different initial network size $N$ and average degree $\overline{d}$. 
Exp.~3 further validates the findings from the two experiments above by evaluating the opinion values as a result of different prowling algorithms. We test the average opinion values of nodes in $G_t$ during a centralization process until this value becomes higher than 0.99. An algorithm that results in faster growth in opinion values can be considered more effective at centralization.

\subsection{Results and discussion}
\paragraph*{\bf Experiment 1.} When running on the real-world datasets with the settings above, all prowling algorithms produce a dominating set, and hence opinions converge in the network. Our discussions above point out a correlation between domination cost and centralization cost, namely, the $(1-\epsilon)$-centralization cost equals to the domination cost plus $t_\epsilon$ (see Section~\ref{sec:unit}). In this experiment, we compute the value of $t_\epsilon$ when $\epsilon$ is roughly 0. It turns out that, over the real-world networks, for all cases considered, $t_\epsilon=0$ for $\epsilon\sim 0$. This validates our conjecture above that domination costs directly reflects centralization costs. 

Table~\ref{tab:real-steps} shows the domination costs for $r\in \{1,2\}$ and $k\in \{4,7,10\}$ after 10 runs of each algorithm. We make the following observations: First, when comparing the local algorithms ($\lrand$, $\local$) with the prowling algorithms, the prowling algorithms in general achieve lower domination costs. This is especially so for $r=2$ which is consistent with the theoretical analysis above. For $r=1$, the $\local$ algorithm tends to outperform $\lrand$ and the exterior-based algorithms. Nevertheless, the B-set prowling algorithms in general outperform the other algorithms in all cases. This indicates that prowling could result in much smaller centralization costs than local algorithms that do not use prowling. Moreover, B-set prowling in general is more suitable than exterior-based prowling. Second, the prowling algorithms that are based on degrees ($\degr$, $\boomerang$, $\bndeg$) in general perform better than their random-walk counterparts. In particular, $\degr$ clearly outperforms the other exterior-based algorithm $\random$ in all cases; $\boomerang, \bndeg$ clearly outperform $\brandom$ (with the Wikitalk $r=1,k=7$ as the only exception). This is also consistent with our theoretical analysis as we stipulate that degrees can be used as an effective criterion for centralization. Third, between the two degree-based B-set prowling algorithms, we observe comparable performance when $r=2$. Nevertheless, $\bndeg$ performs significantly better when $r=1$ for all datasets apart from Wikitalk. This suggests that it may be more appropriate to adopt N-degree in prowling. 

A slight unexpected result is that when $k$ increases, there does not seem to be a noticeable difference on the domination cost. This may be due to the fact that the degree ascending paths in the network are in general very short, or the exterior set (and B-set) are fragmented as access units are selected, resulting in a lack of long paths. In all subsequent experiments, we thus set $k=4$. 

Also, Wikitalk demonstrates noticeable differences as compared to the other three datasets. This can be explained by the distinct topology of this network. Indeed, Wikitalk is very sparse  with a  much lower average degree and clustering coefficient than others, and it holds a node with a dis-proportionally high degree (29,479).

\begin{table*}[ht!]
\centering
\caption{Experiment 1. The domination costs of different prowling algorithms over real-world datasets for $r\in \{1,2\}$ and $k\in \{4,7,10\}$.}
\label{tab:real-steps}
\resizebox{0.8\linewidth}{!}{%
\begin{tabular}{|c|c|l|l|l|l|l|l|l|l|l|l|l|}
\hline
 & \multicolumn{6}{c|}{Facebook} & \multicolumn{6}{c|}{Wikitalk} \\ \hline
r & \multicolumn{3}{c|}{1} & \multicolumn{3}{c|}{2} & \multicolumn{3}{c|}{1} & \multicolumn{3}{c|}{2} \\ \hline
$k$ & 4 & \multicolumn{1}{c|}{7} & \multicolumn{1}{c|}{10} & \multicolumn{1}{c|}{4} & \multicolumn{1}{c|}{7} & \multicolumn{1}{c|}{10} & \multicolumn{1}{c|}{4} & \multicolumn{1}{c|}{7} & \multicolumn{1}{c|}{10} & \multicolumn{1}{c|}{4} & \multicolumn{1}{c|}{7} & \multicolumn{1}{c|}{10} \\ \hline
$\lrand$ & \multicolumn{3}{c|}{8,490${\pm 80}$} & \multicolumn{3}{c|}{3,831${\pm 39}$} & \multicolumn{3}{c|}{14,370${\pm 4,090}$} & \multicolumn{3}{c|}{187${\pm 9}$} \\ \hline
$\local$ & \multicolumn{3}{c|}{7,868${\pm 19}$} & \multicolumn{3}{c|}{5,579${\pm 16}$} & \multicolumn{3}{c|}{196${\pm 0}$} & \multicolumn{3}{c|}{184${\pm 0}$} \\ \hline
$\random$ & \multicolumn{1}{l|}{8,333${\pm 64}$} & 8,328${\pm 42}$ & 8,322${\pm 49}$ &3,866${\pm 33}$ &3,831${\pm 22}$ &3,854${\pm 18}$ &6,890${\pm 6,119}$ &13,372${\pm 6,024}$ &12,183${\pm 4,806}$ &185${\pm 20}$ &192${\pm 17}$ &181${\pm 21}$ \\ \hline
$\degr$ & \multicolumn{1}{l|}{7,619${\pm 52}$} & 7,606${\pm 42}$ & 7,586${\pm 19}$ &3,710${\pm 51}$ &3,692${\pm 26}$ &3,725${\pm 44}$ &2,701${\pm 24}$ &2,713${\pm 5}$ &2,712${\pm 10}$ &185${\pm 7}$ &185${\pm 7}$ &183${\pm 8}$ \\ \hline
$\brandom$ & \multicolumn{1}{l|}{7,256${\pm 102}$} & 7,560${\pm 91}$ & 7,328${\pm 63}$ & 3,358${\pm 33}$ & 3,470${\pm 15}$ &3,388${\pm 23}$ &204${\pm 4}$ & \textbf{173 ${\pm 10}$} &210${\pm 4}$ &155${\pm 5}$ &171${\pm 10}$ &157${\pm 7}$ \\ \hline
$\boomerang$ & \multicolumn{1}{l|}{6,926${\pm 29}$} & 6,943${\pm 32}$ & 6,916${\pm 40}$ & \textbf{3,210${\pm 16}$} & \textbf{3,205${\pm 30}$} &3,220${\pm 27}$ & \textbf{194${\pm 0}$} &194${\pm 0}$ & \textbf{194${\pm 0}$} & \textbf{148${\pm 0}$} & \textbf{148${\pm 0}$} & \textbf{148${\pm 0}$} \\ \hline
$\bndeg$ & \multicolumn{1}{l|}{\textbf{6,360${\pm 14}$}} & \textbf{6,348${\pm 26}$} & \textbf{6,353${\pm 13}$} &3,225${\pm 8}$ &3,210${\pm 20}$ & \textbf{3,218${\pm 45}$} & \textbf{194${\pm 0}$} &194${\pm 0}$ & \textbf{194${\pm 0}$} & \textbf{148${\pm 0}$} & \textbf{148${\pm 0}$} & \textbf{148${\pm 0}$} \\ \hline \hline
\multicolumn{1}{|l|}{} & \multicolumn{6}{c|}{Citation} & \multicolumn{6}{c|}{Enron} \\ \hline
r & \multicolumn{3}{c|}{1} & \multicolumn{3}{c|}{2} & \multicolumn{3}{c|}{1} & \multicolumn{3}{c|}{2} \\ \hline
$k$ & 4 & \multicolumn{1}{c|}{7} & \multicolumn{1}{c|}{10} & \multicolumn{1}{c|}{4} & \multicolumn{1}{c|}{7} & \multicolumn{1}{c|}{10} & \multicolumn{1}{c|}{4} & \multicolumn{1}{c|}{7} & \multicolumn{1}{c|}{10} & \multicolumn{1}{c|}{4} & \multicolumn{1}{c|}{7} & \multicolumn{1}{c|}{10} \\ \hline
$\lrand$ & \multicolumn{3}{c|}{2,457${\pm 45}$} & \multicolumn{3}{c|}{498${\pm 14}$} & \multicolumn{3}{c|}{21,313${\pm 12405}$} & \multicolumn{3}{c|}{824${\pm 32}$} \\ \hline
$\local$ & \multicolumn{3}{c|}{1,748${\pm 32}$} & \multicolumn{3}{c|}{870${\pm 10}$} & \multicolumn{3}{c|}{2,514${\pm 9}$} & \multicolumn{3}{c|}{2,059${\pm 11}$} \\ \hline
$\random$ &2,405${\pm 48}$ & \multicolumn{1}{c|}{2,471${\pm 40}$} & \multicolumn{1}{c|}{2416${\pm 31}$} & \multicolumn{1}{c|}{495${\pm 15}$} & \multicolumn{1}{c|}{485${\pm 12}$} & \multicolumn{1}{c|}{503${\pm 23}$} &19,697${\pm 175}$ &20,906${\pm 395}$ &19,593${\pm 940}$ &811${\pm 11}$ &811${\pm 25}$ &808${\pm 15}$ \\ \hline
$\degr$ & \multicolumn{1}{l|}{2,337${\pm 30}$} &2,290${\pm 57}$ &2,300${\pm 36}$ &455${\pm 24}$ &465${\pm 17}$ &463${\pm 23}$ &17,693${\pm 590}$ &17,684${\pm 145}$ &17,817${\pm 704}$ &777${\pm 12}$ &771${\pm 11}$ &780${\pm 23}$ \\ \hline
$\brandom$ & \multicolumn{1}{l|}{17,57${\pm 38}$} &2,105${\pm 59}$ &1,819${\pm 35}$ &338${\pm 19}$ &444${\pm 25}$ &359${\pm 16}$ &2,397${\pm 17}$ &12,457${\pm 311}$ &2,556${\pm 31}$ &741${\pm 36}$ &755${\pm 30}$ &720${\pm 28}$ \\ \hline
$\boomerang$ & \multicolumn{1}{l|}{1,596${\pm 22}$} &1,599${\pm 23}$ &1,595${\pm 26}$ &319${\pm 12}$ &321${\pm 10}$ & \textbf{316${\pm 20}$} &2,212${\pm 7}$ &2,210${\pm 11}$ &2,214${\pm 9}$ & \textbf{699${\pm 7}$} &699${\pm 13}$ & \textbf{697${\pm 4}$} \\ \hline
$\bndeg$ & \multicolumn{1}{l|}{\textbf{1,548${\pm 27}$}} & \textbf{1543${\pm 24}$} & \textbf{1,551${\pm 28}$} & \textbf{316${\pm 21}$} & \textbf{312${\pm 8}$} &320${\pm 17}$ & \textbf{2,137${\pm 5}$} & \textbf{2,135${\pm 10}$} & \textbf{2,133${\pm 10}$} &700${\pm 7}$ & \textbf{696${\pm 4}$} &700${\pm 9}$ \\ \hline
\end{tabular}%
}
\end{table*}

\smallskip

{\bf Experiment 2.} We perform three tests over networks generated by the dynamic models: BA, SBM, JR, and RC. 
(1) Setting $r=2$, and $N=5000$, we then compare domination costs with varying average degree $\overline{d}$.
(2) Finally, setting $r=2$, $\overline{d}=6$, we compare domination costs with varying initial size $N$. 
(3) Setting $N=5000$, and $\overline{d}=6$, we first compare domination costs with varying radius $r$. 
The results are shown in Fig.\ref{fig:syndeg}.
In general, we can see $\prowl_\boomerang$ and $\prowl_\bndeg$ outperform other algorithms.

As expected, when we increase the average degree from 4 to 14 or increase the radius from 1 to 4, all prowling algorithms achieve smaller domination costs, and moreover they tend to have similar performance. On the contrary, as we increase the initial size of the network, the domination costs also tend to increase. All algorithms exhibit rather stable performance across the cases where in general, $\bndeg$ and $\boomerang$ achieve the best performance. $\local$ results in far higher domination costs than the other algorithm in many cases. The results also reflect differences between the networks with different structural properties. In particular, $\local$ tends to perform well in the scale-free networks (BA), where as all algorithms apart from $\local$ perform comparably well in SBM.

\begin{figure}[ht]
  \centering
  \includegraphics[width=1\linewidth]{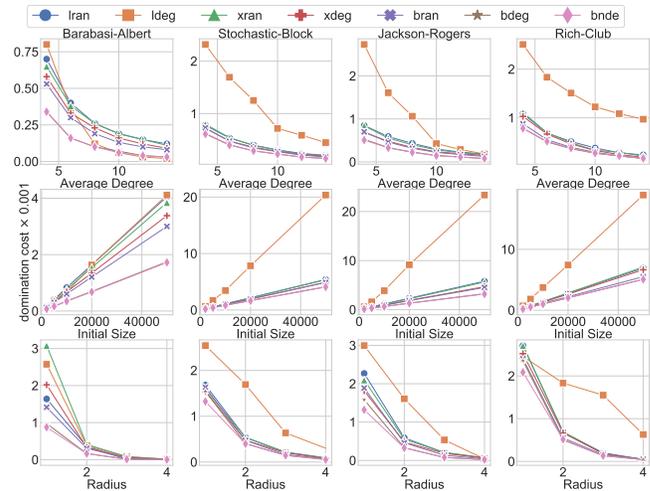}
  \caption{\footnotesize Experiment 2. Domination costs of synthetic networks produced by four dynamic network models (columns) with varying average degree (top row), initial size (middle row) and radius (bottom row). }\label{fig:syndeg}
\end{figure}

\paragraph*{\bf Experiment 3.} For further analysis, we calculate the average opinion values during the centralization processes over the real-world networks. 
Fig.~\ref{fig:realave} displays changes to the average opinion starting from the initial time stamp, as more access units a chosen by different prowling algorithms (fixing $r=2$), until the average opinion reaches above 0.8. In all datasets, the B-set prowling algorithms $\boomerang$ and $\bndeg$ are the ones that lift the average opinion the fastest, and are the first ones to achieve opinion value $\geq 0.8$. It is also remarkable that this is achieved only within a small number ($\leq 50$) of steps. In the citation network, $\local$ achieves comparable performance as the prowling algorithms, which may be explained by the high density of this dataset. In general, the degree-based prowling methods perform better than their random-walk counterparts. These findings are consistent with the two experiments above and provide further confirmation of the superior effectiveness of the degree-based B-set prowling algorithms in centralization.


\begin{figure}[ht!]
  \centering
  \includegraphics[width=\linewidth]{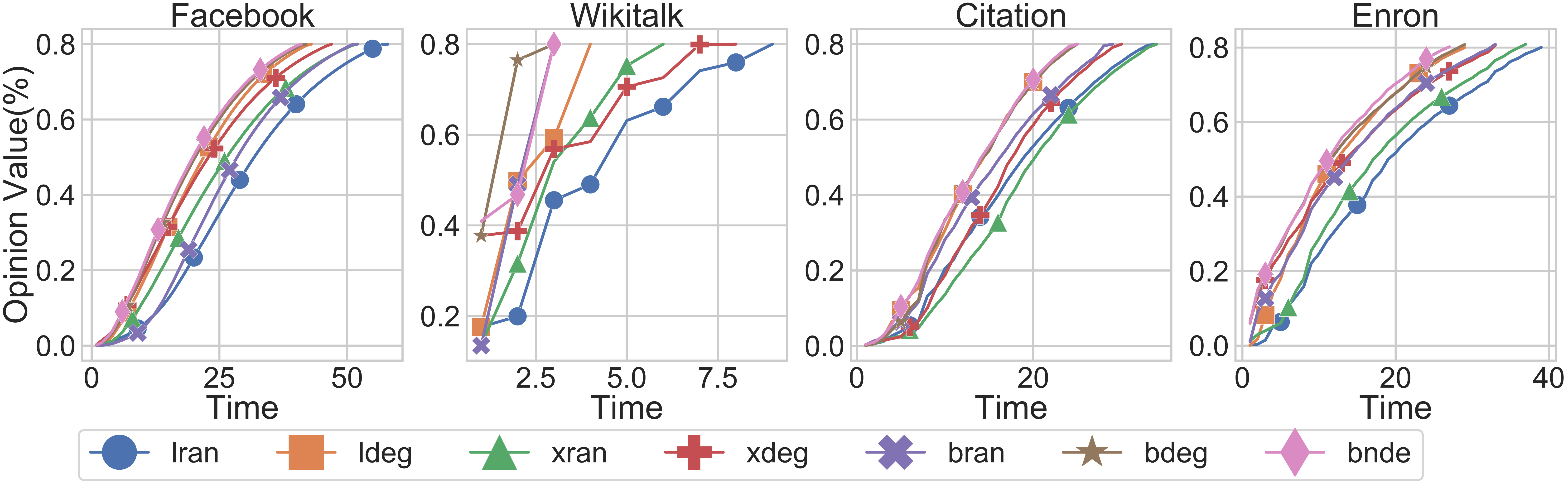}
  \caption{Experiment 3. Changes to the average opinion values with time over the real-world networks.}\label{fig:realave}
\end{figure}

\section{Conclusion and Future Work}
This paper investigates the interplay between decentralized systems and centralized control by introducing several novel concepts: (1) {\em Decentralized networks} are dynamic and unknown networks where agents are monitored and influenced through access units. (2) The {\em decentralized DeGroot model} extends the classical DeGroot model of opinion dynamics to a decentralized setting. (3) The {\em centralization problem} seeks a way to impose control to a decentralized network by selecting access units. (4) {\em prowling} are adaptive methods to explore the decentralized network. From a technical standpoint, we investigate prowling algorithms that facilitate opinion convergence. Firstly, we reveal the relationship between {\em domination} and {\em centralization} and show that domination-driven prowling algorithms hold the key to opinion convergence. Then, we explore prowling algorithms in unit-growth networks that guarantee to find dominating sets. Our theoretical finding inspires us to apply degree (or N-degree)-based prowling to reduce domination costs. Lastly, experimental results on real-world and synthetic dynamic networks verify that our algorithm achieves better domination costs compared to benchmarks. 

The framework proposed in this paper, i.e., decentralized networks and the centralization problem, has the potential to provide a new perspective on decentralized systems that are of interest in multi-agent systems, computational social sciences, and research about the Web. Further work may include investigating new local algorithms for selecting access units, as well as exploring centralization for other applications such as influence maximization (in a similar way as \cite{wilder2018maximizing}) or norm emergence. 

\bibliographystyle{ACM-Reference-Format}
\bibliography{aamas22}

\end{document}